\documentclass[11pt, letter]{article}

\usepackage[margin=1in]{geometry}
\usepackage{framed}
\usepackage{microtype}
\usepackage{enumerate}

\usepackage{relsize}
\usepackage{graphicx} % pictures
\usepackage{amsmath,amsfonts,amssymb,amsthm,amsopn}
\usepackage{tikz}
\usepackage{nicefrac}
\usepackage{paralist}
\usepackage[ruled,vlined,linesnumbered]{algorithm2e}
\providecommand{\DontPrintSemicolon}{\dontprintsemicolon}
\usetikzlibrary{shapes, arrows}

\usepackage{xcolor}
\usepackage[singlelinecheck=off,justification=raggedright]{caption}
\usepackage{floatrow}

\newfloatcommand{capbtabbox}{table}[][\FBwidth]

\usepackage{mathtools}
\usepackage[hidelinks]{hyperref}
\usepackage{color}
\theoremstyle{plain}

\newtheorem{theorem}{Theorem}[section]
\newtheorem{corollary}[theorem]{Corollary}

\newtheorem{lemma}[theorem]{Lemma}

\newtheorem{definition}[theorem]{Definition}
\newtheorem{observation}[theorem]{Observation}

\newcommand{\ProblemFormat}[1]{{\sc #1}}
\newcommand{\ProblemName}[1]{\ProblemFormat{#1}\xspace}

% Commonly used names
\newcommand{\probkmedian}[0]{\ProblemName{$k$-Median}}

\newcommand{\probeuclideankmeans}[0]{\ProblemName{Euclidean $k$-Means}}
\newcommand{\probkcenter}[0]{\ProblemName{$k$-Center}}

\newcommand{\probkmeans}[0]{\ProblemName{$k$-Means}}

\newcommand{\proborderedkmedian}[0]{\ProblemName{Ordered $k$-Median}}
\newcommand{\probmetricorderedkmedian}[0]{\ProblemName{Metric Ordered $k$-Median}}
\newcommand{\probonerectangleorderedkowa}[0]{\ProblemName{Rectangular Ordered $k$-Median}}

\newcommand{\probkcentdian}[0]{\ProblemName{$k$-Centdian}}
\newcommand{\probkfacilitypcentrum}[0]{\ProblemName{$k$-Facility $p$-Centrum}}

\newcommand{\cost}{{\mathrm{cost}}}

\newcommand{\poly}{{\mathrm{poly}}}
\newcommand{\OPT}{{\mathrm{OPT}}}
\newcommand{\OPTLP}{{\mathrm{OPT}}^{*}}
\newcommand{\ALG}{{\mathrm{ALG}}}

\newcommand{\lpc}[1]{LP(#1)\xspace}

\newcommand{\cred}{c^{\mathrm{r}}}
\newcommand{\cav}{c_{\mathrm{av}}}
\newcommand{\cavr}{\cred_{\mathrm{av}}}
\newcommand{\cavt}[1]{c^{#1}_{\mathrm{av}}}
\newcommand{\cfarr}{c^T_{\mathrm{far}}}
\DeclareMathOperator{\vol}{vol}
\newcommand{\lr}{{\ell_r}}
\newcommand{\cavlr}{{\cav^{T_r}}}
\newcommand{\wav}{w_{\mathrm{av}}}
\newcommand{\wg}{w_{\mathrm{gs}}}

\newcommand{\expected}[1]{\mathbb{E}[#1]}
\newcommand{\expectedBigPar}[1]{\mathbb{E}\left[#1\right]}

\newcommand{\Sum}{\ensuremath{\mathlarger{\sum}}}

\newcommand{\Oh}{\ensuremath{\mathcal{O}}}

\newcommand{\np}{{\mathsf{NP}}}
\newcommand{\p}{{\mathsf{P}}}
\newcommand{\calA}{\mathcal{A}}

\newcommand{\calW}{\mathcal{W}}
\newcommand{\calM}{\mathcal{M}}

\newcommand{\calB}{\mathcal{B}}
\newcommand{\calF}{\mathcal{F}}
\newcommand{\calC}{\mathcal{C}}
\newcommand{\calU}{\mathcal{U}}

\newcommand{\calI}{\mathcal{I}}

\newcommand{\reals}{{{\mathbb{R}}}}

\interfootnotelinepenalty=10000
\allowdisplaybreaks
\sloppy

% \addtolength{\oddsidemargin}{-.6in}
% \addtolength{\evensidemargin}{-.6in}
% \addtolength{\textwidth}{1.2 in}
% \addtolength{\topmargin}{-.3in}
% \addtolength{\textheight}{1.1 in}

\begin{document}

\title{Constant-Factor Approximation for \proborderedkmedian}

\date{}

\author{
Jaros{\l}aw Byrka\thanks{University of Wroc{\l}aw, Wroc{\l}aw, Poland,
\texttt{jby@cs.uni.wroc.pl}.}
\and
 Krzysztof Sornat\thanks{University of Wroc{\l}aw, Wroc{\l}aw, Poland,
\texttt{krzysztof.sornat@cs.uni.wroc.pl}.}
\and
Joachim Spoerhase\thanks{University of Wroc{\l}aw, Wroc{\l}aw, Poland,
\texttt{joachim.spoerhase@uni-wuerzburg.de}.}
}

\maketitle

\begin{abstract}
  We study the \proborderedkmedian problem, in which the solution is
  evaluated by first sorting the client connection costs and then
  multiplying them with a predefined non-increasing weight vector
  (higher connection costs are taken with larger weights). Since the
  1990s, this problem has been studied extensively in the discrete
  optimization and operations research communities and has emerged as
  a framework unifying many fundamental clustering and location
  problems such as \probkmedian and \probkcenter. This generality,
  however, renders the problem intriguing from the algorithmic
  perspective and obtaining non-trivial approximation algorithms was
  an open problem even for simple topologies such as trees.  Recently,
  Aouad and Segev were able to obtain an $\Oh(\log n)$ approximation
  algorithm for \proborderedkmedian using a sophisticated local-search
  approach and the concept of surrogate models thereby extending the
  result by Tamir (2001) for the case of a rectangular weight vector,
  also known as \probkfacilitypcentrum.  The existence of a
  constant-factor approximation algorithm, however, remained open even
  for the rectangular case.

  In this paper, we provide an LP-rounding constant-factor
  approximation algorithm for the \proborderedkmedian
  problem.

  We obtain this result by revealing an interesting connection to the
  classic \probkmedian problem.  We first provide a new analysis of
  the rounding process by Charikar and Li (2012) for \probkmedian,
  when applied to a fractional solution obtained from solving an LP
  relaxation over a non-metric, truncated cost vector, resulting in an
  elegant $15$-approximation for the rectangular case. In our
  analysis, the connection cost of a single client is partly charged
  to a deterministic budget related to a combinatorial bound based on
  guessing, and partly to a budget whose expected value is bounded
  with respect to the fractional LP-solution. This approach allows us
  to limit the problematic effect of the variance of individual client
  connection costs on the value of the ranking-based objective
  function of \proborderedkmedian.  Next, we analyze
  objective-oblivious clustering, which allows us to handle multiple
  rectangles in the weight vector and obtain a constant-factor
  approximation for the case of $\Oh(1)$ rectangles.  Then, we show
  that a simple weight bucketing can be applied to the general case
  resulting in $\Oh(\log n)$ rectangles and hence in a constant-factor
  approximation in quasi-polynomial time.  Finally, with a more
  involved argument, we show that also the clever distance bucketing
  by Aouad and Segev can be combined with the objective-oblivious
  version of our LP-rounding for the rectangular case, and that it
  results in a true, polynomial time, constant-factor approximation
  algorithm.
\end{abstract}

\thispagestyle{empty}

\newpage
%===============================================================================
%===============================================================================
%===============================================================================

\setcounter{page}{1}
\section{Introduction}
%===============================================================================
%===============================================================================
%===============================================================================

Clustering a given set of objects into $k$ groups that display a
certain internal proximity is a profound combinatorial optimization
setting. In a typical setup, we represent the objects as points in a
metric space and evaluate the quality of the clustering by a certain
function of distances within clusters. If clusters have centers and
the objective is to minimize the total distance from objects to their
cluster centers, we call the resulting optimization problem
\probkmedian. If the objective is to minimize the maximal distance to
a cluster center, then we talk about the \probkcenter problem. These
two approaches to clustering represent the two extrems in their
dependence on the variance between the indiviual connection costs in
the evaluated solution. Several intermediate approaches have been
studied such as minimizing the sum of squared connection costs known
as the \probkmeans problem.

In this paper, we study the \proborderedkmedian problem where the
connection costs are sorted non-increasingly and a
non-increasing weight vector is applied to flexibly penalize the
desired fraction of the highest costs. There is a large body of
literature on this problem because it naturally unifies many of the
most fundamental clustering and location problems such as
\probkmedian, \probkcenter, \probkcentdian, and \probkfacilitypcentrum
(see below for definitions). We refer to the book of Nickel and
Puerto~\cite{Nickel09-ordered-k-median-book} dedicated to
\proborderedkmedian problems for an extensive overview. See also below
for a selection of related works and also applications in
multi-objective optimization and robust optimization.

The generality of \proborderedkmedian renders it intriguing
from the computational perspective~\cite{AouadS17:logn-apx}. For
example, whereas \probkmedian and \probkcenter can be solved
efficiently on trees by dynamic programming such approaches seem to
fail for \proborderedkmedian due to the lack of separability
properties~\cite{PuertoT05:treeshaped-ordered}. Regarding
approximability in general metric spaces, constant-factor
approximation algorithms are long known for
\probkmedian~\cite{CharikarGTS99-kmedian} and
\probkcenter~\cite{HochbaumS85}. In contrast, not even non-trivial
\emph{super-constant} approximability results were known for
\proborderedkmedian until very recently and even developing
constant-factor approximations for seemingly simple topologies such as
trees turned out non-trivial~\cite{AouadS17:logn-apx}. In particular,
due to the non-linearity of the objective function there seems to be
no obvious way to apply tools such as metric tree
embeddings~\cite{AouadS17:logn-apx,bartal98:tree-embeddings}. To
demonstrate the highly non-local and dependent structure of the
objective function note that even if the clusters are given the
selection of the cluster centers cannot be made solely on a
per-cluster basis but depends on the decision in other clusters due to
the ranking of distances in the objective.

\paragraph{Related works for \proborderedkmedian.} The problem
generalizes many fundamental clustering and location problems
\cite{CharikarGTS99-kmedian,jain2003greedy,arya2004local,li2016approximating,ByrkaPRST17,HochbaumS85,tamir98:centdiantree,Tamir01}
such as the above-mentioned \probkmedian, \probkcenter problems, the
\probkcentdian problem where the objective is a convex combination of
the \probkmedian and the \probkcenter objective, or the
\probkfacilitypcentrum problem where the objective accounts for the
$p$ highest connection costs. The ordered median objective function
has also been considered in robust
optimization~\cite{PuertoRT17-k-sum,BertsimasS03-robust-k-sum,bertsimas05:opt-over-integers,bertsimas2014-least-quantile-regression}
and multi-objective
optimization~\cite{FernandezPPS17-OWA-multi-obj-ST}. For a
comprehensive overview we refer to the
books~\cite{Nickel09-ordered-k-median-book,laporte-etal:location-science} on \proborderedkmedian
problems and to dedicated works~\cite{NickelP99-location-problems-unified,Nickel:ordered-weber,BradleyFM99-math-prog-form-chall,puerto-fernandez:geometrical-props-single-facility}.

Due to the above-outlined difficulties in obtaining algorithmic
results for the general problem, structural properties of continuous
network spaces have been
studied~\cite{PuertoR05-exp-card-ordered-median,DreznerN09-dc-decomposition-median,EspejoRV09-convex-ordered-median,Rodríguez-Chía2005:ordered-networks}
where facilities may be placed at interior points on edges,
single-facility
models~\cite{NickelP99-location-problems-unified,DreznerN09-ordered-one-median-plane,KalcsicsNPT02-algorithic-ordered,EspejoRV09-convex-ordered-median},
and multi-facility models on special topologies such as trees
\cite{KalcsicsNP03-multifacility-ordered,Tamir01,PuertoRT17-k-sum}. Also,
integer programming formulations~\cite{Nickel:ordered-weber},
branch-and-bound
methods~\cite{BolandDNP06:exact-ordered-median,PuertoRR13-branch-cut-ordered-median},
heuristics
\cite{Dominguez-MarinNHM05-heuristic,StanimirovicKD07-genetic-ordered,PuertoPG14-modified-vns,labbe17-comparative-study-ordered-median},
and related location models~\cite{puerto2009minimax,PuertoT05:treeshaped-ordered} have been studied.
For a survey on the topic we refer to~\cite{Nickel09-ordered-k-median-book}.

Although the problem has received much attention in the discrete
optimization and operations research literature, obtaining any
non-trivial approximation algorithm for \proborderedkmedian had been
an intriguing open question until very recently, when Aouad and Segev
\cite{AouadS17:logn-apx} were able to devise an $\Oh(\log n)$
approximation algorithm for the problem using a sophisticated
local-search approach and the concept of \emph{surrogate
  models}.

Below, we list approximability results for certain specific objective
functions that fall under the framework of \proborderedkmedian or that
are closely related and where approximability has been studied for
general metrics.

\paragraph{Approximation algorithms for \probkmedian, \probkcenter, and \probkmeans.}
\probkmedian admits constant-factor approximations via local-search~\cite{arya2004local},
or a direct rounding of the standards LP~\cite{CharikarL12:k-median}.
The current best ratio of $(2.675+\epsilon)$~\cite{ByrkaPRST17} is obtained
by combining a primal-dual algorithm~\cite{jain2003greedy}, and a nontrivial rounding of a so-called bi-point solution based on a preprocessing introduced in~\cite{li2016approximating}.

The situation with the \probkcenter setting is simpler. A simple 2-approximation\footnote{In the setting where cluster centers are selected from a different set the method of~\cite{HochbaumS85} gives a tight 3-approximation.} is obtained
via guessing the longest connection distance in the optimal solution~\cite{HochbaumS85}, 
and this is tight assuming $\p \neq \np$. Notably, by contrast to the \probkmedian setting,
the most natural LP for \probkcenter has unbounded integrality gap.

Also \probkmeans admits a constant-factor approximation. The
$(9+\epsilon)$-approximation algorithm for
\probeuclideankmeans~\cite{KanungoMNPSW04} can be shown to provide
$25$-approximation in general metrics. The recent work of Ahmadian et
al.~\cite{AhmadianNSW16} decreases these ratios to $6.375$ and $9$,
respectively.

\paragraph{Approximation algorithms for further specific objective
  functions.}
A special case of \proborderedkmedian that we call
\probonerectangleorderedkowa was considered by Tamir~\cite{Tamir01}
(who called it \probkfacilitypcentrum). In this setting, we have to
open exactly $k$ facilities and the objective function is just a sum
of $p$ largest client connection costs. He gives polynomial time
algorithms that solve the problem (optimally) on path and tree graphs
and using tree-embedding gives a $\Oh(\log n)$-approximation for this
case. Obtaining a constant-factor approximation for
\probonerectangleorderedkowa, however, has been an open
problem~\cite{Tamir01,AouadS17:logn-apx}.

One should also notice at least two further, very recent works combining the
\probkcenter objective and the \probkmedian objective. First, Alamdari
and Shmoys~\cite{AlamdariS:waoa17} considered a bicriteria
approximation algorithm for the \probkcenter and \probkmedian
problems, i.e., the objective function is a linear combination of two
objectives that are maximal connection cost in use and sum of all used
connection costs. This problem is known as \probkcentdian
\cite{tamir98:centdiantree}.  They obtained polynomial time bicriteria
approximation of $(4,8)$, where the first factor is in respect to the
\probkcenter objective and the second factor is in respect to the
\probkmedian objective. (Alamdari and Shmoys note, however, that the
two problems \probkmedian and \probkcenter are \emph{not} approximable
simultaneously.)  Also \probkcentdian is a special case of
\proborderedkmedian. The second recent work combining the \probkmedian and
the \probkcenter objective is the work of Haris et
al.~\cite{harris2017symmetric} who propose a method to select $k$
facilities that deterministically guarantees each client to have a
connection within a certain fixed radius but also provides a stronger
per client bound on cost expectation.

\paragraph*{Relation to the work of Chakrabarty and
  Swamy~\cite{ChakrabartyS17}.} Soon after the submission of our
paper, Chakrabarty and Swamy~\cite{ChakrabartyS17} announced
constant-factor approximation algorithms for
\probonerectangleorderedkowa and also for (general)
\proborderedkmedian. The part of their argument for
\probonerectangleorderedkowa appears to be obtained
independently. Instead of the LP-rounding process of Charikar and
Li~\cite{CharikarL12:k-median}, they either use a primal-dual approach
or a black-box reduction to \probkmedian.

%Also we have found that a special case of
%the \proborderedkmedian problem that we call \probonerectangleorderedkowa was
%already considered by Tamir~\cite{Tamir01}. They call the same problem by
%{\it $p$-facility $k$-centrum problem} in which we have to open exactly $k$
%facilities but the objective function is just a sum of $p$ largest 
%connection cost in use. They give polynomial time algorithms that solve
%the problem (optimally) on path and tree graphs and using
%tree-embedding they give $\Oh(\log(n))$-approximation for this case.
%The most recent result is an impressive $\log(n)$-approximation algorithm
%for general non-increasing sequence of weights by
%Aouad and Segev~\cite{AouadS17:logn-apx}. For the result they analyze 
%a local search algorithm. In our paper we focus on the same
%model as Aouad and Segev but we would like to note that an increasing
%sequence of weights makes some sens as well.

\subsection{Our Results and Techniques}
Our main result is an LP-rounding constant-factor approximation
algorithm for the \proborderedkmedian problem.

We are not aware of a LP relaxation for \proborderedkmedian with
bounded integrality gap. In our approach we guess a \emph{reduced cost
function} roughly mimicking the weighting of distances in an optimum
solution and solve the natural LP relaxation for \probkmedian under
this reduced cost function (rather than under the original
metric). Subsequently, we round this solution via a dependent LP
rounding process by Charikar and Li \cite{CharikarL12:k-median} for
\probkmedian operating on the original (unweighted) metrics.

The challenge and our main technial contribution consists in analyzing
the approximation performance of this approach. In the original
analysis of Charikar and Li~\cite{CharikarL12:k-median} for the
\probkmedian objective, a per-client bound on the expected connection
cost of this client with respect to its fractional connection cost is
established.  The global approximation ratio is then obtained by
linearity of expectation.  The above-described non-linear,
ranking-based character of the objective of \proborderedkmedian poses
an obstacle to apply an analogous reasoning also in our more general
setting as the actual weight that is applied to the connection cost of
a client depends highly on the (random) connection costs of the other
clients.

We use four key ingredients to overcome this technical hurdle.

First, we show that the algorithm provides a constant-factor
approximation for rectangular weight vectors. This already answers the
open problem stated in~\cite{AlamdariS:waoa17,AouadS17:logn-apx}.  In
our analysis, the connection cost of a single client is partly charged
to a \emph{deterministic budget} related to a \emph{combinatorial bound}
based on guessing, and partly to a \emph{probabilistic 
budget} whose expected value is bounded with respect to the 
fractional LP-solution. This approach allows to limit the
above-described problematic effect of the variance of individual
client connection costs on the value of the ordered objective function
of \proborderedkmedian.

Second, we show a surprising \emph{modularity} of Charikar and Li's
rounding process. The solution computed by this process can be related
to the above-mentioned combinatorial and fractional bounds
simultaneously with respect to \emph{all} rectangular objectives. This
property is oblivious to the objective with respect to which the input
fractional solution was optimized.

Third, we \emph{decompose} an arbitrary non-increasing weight vector
into a convex combination of rectangular objectives. The
aforementioned modularity property provides a bound for each of those
objectives.  We show that those bounds nicely combine to a global
bound on the approximation ratio giving a constant-factor
approximation with respect to a combinatorial bound and a fractional
bound both under the original, general weight objective.

A straightforward application of this approach incorporating weight
bucketing gives only quasi-polynomial time due to the guessing part.
To achieve a truly polynomial time algorithm we apply a clever
distance bucketing approach by Aouad and Segev
\cite{AouadS17:logn-apx}, which guesses for each distance bucket the
average weight applied to this bucket by some optimal solution. Our
analysis approach applies also to this more intricate setting but
turns out technically more involved.

\section{Definitions}\label{sec:definitions}
%===============================================================================
%===============================================================================
%===============================================================================

\begin{definition}
 In the \probmetricorderedkmedian problem we are given: a finite set of facilities $\calF$,
 a set of clients $\calC$, $|\calC| = n$, a metric cost function $c\colon\calF \times \calC \to \reals_{\geq 0}$,
 an integer $k \geq 1$ as the number of facilities to open and a non-increasing weight vector $w = (w_1, \dots, w_n)$.
 For a subset $\calW \subseteq \calF$ and client $j\in\calC$, we define
 $c_j(\calW) = \min_{i \in \calW} c(i,j)$ as the smallest {\it connection cost of $j$} to a facility in $\calW$.
 We sort the values $c_j(\calW), j \in \calC$ in non-increasing order i.e. we define
 $c^\to(\calW) = (c_j^\to(\calW): 1 \leq j \leq n)$ such that
 $\{c_j^\to(\calW): 1 \leq j \leq n\} = \{c_j(\calW): j \in \calC \}$ and
 $c_j^\to(\calW) \geq c_{j'}^\to(\calW)$ whenever $1 \leq j \leq j' \leq n$.
 {\it The connection cost} of $\calW$ is the weighted sum $\cost(\calW) = \sum_{j = 1}^n w_j \cdot c_j^\to(\calW)$.
 The goal is to find a set $\calW \subseteq \calF, |\calW|=k$ that minimizes the connection cost.
\end{definition}

In the rest of the paper we say \proborderedkmedian for
\probmetricorderedkmedian because non-metric cost function does not
allow us to obtain any non-trivial approximation (unless $\p=\np$).  In what follows we will assume w.l.o.g.\ that $w_1=1$ in the above definition. Let $j\in\calC$ be a client.  Then $\calB(j,r)$ denotes the set of all facilities $i\in\calF$ with $c_{ij}<r$, that is, $\calB(j,r)$ is an open ball (in the set of facilities) of radius $r$ around $j$.

\begin{definition}
  Consider an instance of \proborderedkmedian.  A \emph{reduced cost function} $\cred$ is a (not necessarily metric) function $\cred\colon\calF \times \calC \to \reals_{\geq 0}$ such that for all $i,i'\in\calF$ and $j,j'\in\calC$ we have that $\cred_{ij}\leq c_{ij}$ and that $c_{ij}\leq c_{i'j'}$ implies $\cred_{ij}\leq \cred_{i'j'}$. 
\end{definition}
Reduced cost functions arise naturally for \proborderedkmedian since in its objective function non-increasingly sorted distances are multiplied by non-increasing weights $\leq 1$.

\begin{definition}
 \probonerectangleorderedkowa is a special case of \proborderedkmedian problem with weights $w_1 = w_2 = 
\dots = w_\ell = 1$  and $w_{\ell+1} = w_{\ell+2} = \dots = w_{n} = 0$ for some 
$\ell \in [n]$. For any $\calW\subseteq\calF$ let $\cost_\ell(\calW)$ denote the objective function of $\calW$ for this problem.
\end{definition}

Note that \probonerectangleorderedkowa with $\ell=1$ is equivalent to \probkcenter
and \probonerectangleorderedkowa with $\ell=n$ is equivalent to \probkmedian. In what follows, missing proofs can be found in the appendix.

\section{Algorithmic Framework}\label{sec:algor-fram}
%In what follows we describe the algorithmic framework used in this
%paper. Some parts of this framework will be tailored to specific settings
%and are thus described later.

Our algorithms consist of two parts: An \emph{LP-solving} and an
\emph{LP-rounding} part.

In the \emph{LP-solving} part, we compute an optimal solution to an
LP-relaxation, which is (apart from the objective function) identical
to the standard LP relaxation for \probkmedian.  However, instead of
using the input metrics $c$ in the objective function, we employ a
reduced cost function~$\cred$. Intuitively in ~$\cred$ the distances are multiplied by
roughly the same weights as in a guessed optimal solution. 
%As we do not know
%the optimal solution this requires guessing. More specifically, we
%check several options for $\cred$, solve the corresponding
%LP-relaxation, round the fractional solution as described below to an
%integer solution and output the best of those integral solutions.  The
%guessing depends on the specific setting and will thus be described
%later.

In the \emph{LP-rounding} part the fractional solution provided by the
above-described guessing will be rounded to an integral solution by
applying the algorithm of Charikar and
Li~\cite{CharikarL12:k-median}. In contrast to the LP-solving part,
this algorithm operates, however, in the \emph{original} metric space
rather than in the (generally non-metric) reduced cost space.

\subsection{LP-Relaxation}\label{sec:lp-relaxation}

Let \lpc{$\cred$} be the following relaxation of a natural
ILP formulation of \probkmedian under some reduced cost function $\cred$.
\begin{align}
 \text{minimize} \quad \sum_{i \in \calF, j \in \calC}^{} \: \cred_{ij}x_{ij} & \quad\quad \text{s.t.}&\label{lp:objective}\\
 x_{ij} & \leq y_i \quad\quad i \in \calF, j \in \calC &\label{lp:fractional_opening}\\
 \sum_{i\in\calF} x_{ij} & = 1 \quad\quad j \in \calC &\label{lp:fully_served}\\
 \sum_{i\in\calF} y_{i} & = k &\label{lp:k_opened}\\
 0 \leq x_{ij}, y_{i} & \leq 1 \quad\quad i \in \calF, j \in \calC &\label{lp:non_negative}
\end{align}

Here, $y_i$ denotes how much facility $i$ is open (0---closed, 1---opened) and 
$x_{ij}$ indicates how much client $j$ is served by facility $i$
(0---non-served, 1---served).
Equality~\eqref{lp:k_opened} ensures that exactly $k$ facilities are
opened (possibly fractionally), \eqref{lp:fully_served} guarantee that each
client is served (possibly fractionally). \eqref{lp:fractional_opening}
do not allow a facility to serve a client more than how much it is opened.
%The objective function~\eqref{lp:objective} is the fractional connection cost. 
For each client $j\in\calC$ let
$\cavr(j)=\sum_{i\in\calF}\cred_{ij}x_{ij}$ denote the
fractional (or average) reduced connection cost of $j$.

\subsection{Guessing and LP-Solving}
Note that if $\cred=c$ where $c$ is the input metrics, \lpc{$c$} becomes the standard LP relaxation for the classical \probkmedian objective.  In order
to obtain a valid lower bound \lpc{$\cred$} for a \proborderedkmedian instance,
we employ guessing of certain distances in an optimal solution.
%
%such a bound for the more involved \proborderedkmedian
%objective incorporating the weights, we apply a guessing to obtain a
%suitable reduced cost vector $\cred$ so that we are guaranteed that we
%obtain a valid lower bound \lpc{$\cred$} for at least one guess of
%$\cred$.  
%
The details of the guessing are setting-specific and are thus
described later. We can w.l.o.g.\ assume that we compute an optimal
solution $(x,y)$ to \lpc{$\cred$} such that for all
$i\in\calF, j\in\calC$ we have that $x_{ij}\in\{0,y_i\}$ and that
$y_i>0$. Also, since $\cred$ preserves the order of the distances, we
can assume that if $y$ is kept fixed then $x$ optimizes
$\sum_{i\in\calF,j\in\calC}c_{ij}x_{ij}$ and say that $x$ be
\emph{distance-optimal}. See appendix for details.

\subsection{LP-Rounding: Dependent Rounding Approach of Charikar and Li}

We round the fractional solution obtained in the LP-solving phase to an
integral solution by the (slightly modified) LP-rounding process
of Charikar and Li \cite{CharikarL12:k-median} for \probkmedian.

To apply this algorithm note that the feasibility of a solution
$(x,y)$ to \lpc{$\cred$} does \emph{not} depend on the cost vector $\cred$.  This
enables us to compute an optimum solution $(x,y)$ to \lpc{$\cred$} for some
appropriate reduced cost function and to subsequently apply the rounding
process of Charikar and Li (which operates on the original metrics
$c$) to the solution $(x,y)$.  In the analysis, we have to
exploit how $\cred$ and $c$ are related in order to bound the approximation ratio of the algorithm.

We now describe the rounding algorithm of Charikar and Li, which
consists of four phases: a clustering phase, a bundling phase, a
matching phase, and a sampling phase (see Algorithm~\ref{alg:main}).
 \begin{algorithm}[h]
  
  \SetAlgorithmName{Algorithm}{}
  \DontPrintSemicolon 
  \SetNoFillComment
  \SetKwComment{tcc}{}{}
  \small

    \KwData{feasible fractional solution $(x,y)$ to \lpc{$\cred$} satisfying the properties of Lemma~\ref{lem:canonical-solution}}
    \KwResult{set of $k$ facilities}
     % \tcc{}
     % \tcc{/* LP phase */}
     % $L \gets c_{ab}$\tcc*[l]{ // guessing $\ell$-th largest distance}
     % \For{ $i \in \calF, j \in \calC$}{
     %  \lIf{$c_{ij} \geq L$}{$c_{ij}^r \gets c_{ij}$}
     %  \lElse{$c_{ij}^r \gets 0$}
     % }
     
     % normalize solution $(x,y)$ so that $x_{ij} \in \{0, y_i\}$ for all $i\in\calF, j\in\calC$\;\label{alg:solve_lp}
     %\tcc{\todo{done in Charikar Li:}}
     % remove all facilities $i$ from $\calF$ with $y_i=0$\;
     %\tcc{\todo{?$\calF_j = \{i \in \calF: x_{ij} > 0$}}
     % \tcc{\todo{?instead of using $x$ and $y$, we shall use $(y, \{ \calF_j | j \in \calC \})$ to denote a solution}}

    % \tcc{}
     \smallskip
     \tcc{/* Clustering phase */}
     \tcc{ \textcolor{gray}{/* run a \emph{clustering procedure} to compute a set $\calC'\subseteq\calC$ of \emph{cluster centers} so that each client $j\in\calC$ is ``close'' to some cluster center $j'\in\calC'$ and so that the cluster centers are ``far'' from each other */}}
     $\calC'\gets$ Clustering$(x,y)$\;

     % $\calC' \gets \emptyset$\;
     % $\calC'' \gets \calC$\;
     % $\cavr(j) \gets \sum_{i \in \calF} x_{ij} \cdot c_{ij}^{r}$\;
     % order $\cavr(j)$ in non-decreasing way\;
     % \While{$\calC''$ is non empty}{
     % take $j \in \calC''$ with the smallest $\cavr(j)$\;
     % add $j$ to $\calC'$\;
     % delete from $\calC''$ client $j$\;
     % delete from $\calC''$ all clients $j'$ with $c_{jj'} \leq 4\cavr(j') + 2L$\label{alg:plus2L}
     % }

     %\tcc{}
        \smallskip
     \tcc{/* Bundling phase */}
     \For{$j \in \calC'$}{
     $R_j \gets \frac{1}{2} \min_{j' \in \calC' ,j' \neq j} (c_{jj'})$\;
     $\calU_j \gets \{i \in \calF: i \in \calB(j,R_j)\textnormal{ and } x_{ij}>0\}$\;
     %\tcc{\todo{try also better analysis with scaling $R_j$ to $1.5 \cdot R_j$ or $c \cdot R_j$}}
     }

     %\tcc{}
        \smallskip
     \tcc{/* Matching phase */}

     $\calM \gets \emptyset$\;
     \While{there are unmatched clients in $\calC'$}{
     add to $\calM$ a pair from $\calC'$ that is the closest pair among unmatched clients in $\calC'$
     }

     %\tcc{}
        \smallskip
     \tcc{/* Sampling phase (dependent rounding) */}
     \tcc{\textcolor{gray}{/* Apply dependent randomized rounding as described by Charikar and Li \cite{CharikarL12:k-median} preserving the marginals for the individual facilities, bundles, matched pairs in $\calM$, and set $\calF$ */}}
     \Return DependentRounding$(x,y,\{U_j\}_j,\calM,\calF)$
     % \For{$j \in \calC'$}{construct a binary tree $\calT_j$ using facilities from $\calU_j$ as leaves}
     % \For{$(j,j') \in \calM$}{construct a binary tree $\calT_{jj'}$
     % with a root that has two edges: one to the root of $\calT_j$ and another one to the root of $\calT_{j'}$}
     % construct a binary tree $\calT$ taking nodes of $\calT_{jj'}$ for $(j,j') \in \calM$ as leaves\;
     % \Return solution of dependent rounding procedure on a tournament tree $\calT$
     % [Srinivasan 2001, Kramer et al. 2011, Byrka et al. 2017]
     % \todo{dependent rounding with stages (some fixed nodes, the rest do not have to be fixed)}\;

  \caption{Rounding Algorithm by Charikar and Li \cite{CharikarL12:k-median}}\label{alg:main}
 \end{algorithm}
 Below we give some intuition on the different phases. More formal
 arguments will be given later.

 The purpose of the \emph{clustering procedure} is to compute a set
 $\calC'\subseteq\calC$ of \emph{cluster centers} so that each client
 $j\in\calC$ is ``close'' to some cluster center $j'\in\calC'$ and so
 that the cluster centers are ``far'' from each other. We thus may
 think of the cluster centers representing all remaining clients. The
 implementation of the procedure and the meaning of ``close'' and
 ``far'' is application-specific and will thus be described later.

 In the \emph{bundling phase} each cluster center $j\in\calC'$ is
 associated with a bundle $\calU_j$ of facilities of total fractional opening at least $1/2$, so that we can show
 that the bundles are pairwise disjoint\footnote{Our version of the algorithm is actually slightly simpler here than the original one, which is sufficient for constant-factor approximations.  In the original versions the bundles are based on larger balls that may overlap but are made disjoint in a greedy manner.  This allows Charikar and Li to obtain an improved approximation factor.}.
% and that each bundle has volume at least $1/2$.
 
 In the \emph{matching phase} cluster centers are paired in a greedy
 manner.  As we will show that the volume of each bundle is at least
 $1/2$ the total volume of the bundles of a matched pair is at least
 $1$.  This will ensure that in the subsequent \emph{sampling phase}
 at least one facility per pair is opened.

 In the sampling phase we use the dependent randomized rounding
 procedure described by Charikar and Li \cite{CharikarL12:k-median} to
 open facilities and obtain a feasible solution. We do
 not describe the details of the implementation here but use it as
 a ``black box'' satisfying the following properties (as in the original work of Charikar and Li):

 \begin{lemma}\label{lem:lp-round-depend}
 The procedure DependentRounding in Algorithm~\ref{alg:main} can be implemented such that the following holds for any input $(x,y)$ being a feasible solution to \lpc{$\cred$}.
 \begin{enumerate}[(i)]
 \item\label{item:prob-facility} Each facility $i\in\calF$ is opened with probability precisely $y_i$,
   \item\label{item:prob-bundle} in each bundle $\calU_j$ with $j\in\calC'$ a facility is opened with probability precisely $\vol(\calU_j)$,
   \item\label{item:prob-pair} for each matched pair $(j,j')$ in $\calM$ at least one
     facility in $\calU_j\cup\calU_{j'}$ will be opened,
   \item\label{item:total-cardinality} in total at most $k$ facilities are opened.
   \end{enumerate}
 \end{lemma}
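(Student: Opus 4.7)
My plan is to realize DependentRounding as a two-stage correlated sampling: an inner sampling that handles each bundle (and each matched pair of bundles) locally so that properties (i)--(iii) hold, followed by an outer Srinivasan-style dependent rounding that enforces the global cardinality bound (iv).

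For the inner stage I would first observe that the bundles $\calU_j$ for $j\in\calC'$ are pairwise disjoint: the bundling radius $R_j=\frac{1}{2}\min_{j'\neq j}c_{jj'}$ guarantees $\calB(j,R_j)\cap\calB(j',R_{j'})=\emptyset$ for any two distinct cluster centers, and $\calU_j\subseteq\calB(j,R_j)$. Moreover, since the canonical solution satisfies $x_{ij}\in\{0,y_i\}$ and $\sum_i x_{ij}=1$, we obtain $\vol(\calU_j)=\sum_{i\in\calU_j}y_i\leq 1$. For each cluster center $j$ that is \emph{not} matched I would sample at most one facility from $\calU_j$, opening $i\in\calU_j$ with probability $y_i$ and opening nothing with residual probability $1-\vol(\calU_j)$. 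For each matched pair $(j,j')\in\calM$, one can show (following Charikar--Li) that $\vol(\calU_j)+\vol(\calU_{j'})\geq 1$, so the complementary probabilities $1-\vol(\calU_j)$ and $1-\vol(\calU_{j'})$ sum to at most $1$. I would thus couple the firing events by placing the ``no-fire'' masses of the two bundles on disjoint subintervals of $[0,1]$, which guarantees that at least one bundle fires while preserving each bundle's firing marginal $\vol(\calU_j)$; conditional on a bundle firing, its facility $i$ is selected with probability $y_i/\vol(\calU_j)$. Together these steps yield (i) and (ii) on bundled facilities and (iii) on every matched pair.

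For the outer stage, each bundle (matched or unmatched) contributes a single categorical variable whose total mass on opening some facility equals $\vol(\calU_j)$, and each facility $i$ lying outside every bundle remains a free Bernoulli with marginal $y_i$. The expectations of the resulting indicator vector sum to $\sum_{i\in\calF}y_i=k$ by \eqref{lp:k_opened} and disjointness of the bundles. I would then apply Srinivasan's dependent randomized rounding to this vector, producing a $\{0,1\}$-valued outcome with at most $\lceil k\rceil=k$ ones while preserving all individual marginals, which delivers (iv) together with (i) for unbundled facilities. The main obstacle I anticipate is ensuring that the outer rounding respects the pair constraint (iii); I would handle this by letting the outer procedure operate on the combined indicator ``some facility in $\calU_j\cup\calU_{j'}$ is opened'' for each matched pair rather than on the two bundle indicators separately. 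Because that combined indicator is deterministically $1$ after the inner coupling, the dependent rounding never modifies it, so (iii) is preserved, while the split between the two bundles of the pair and the within-bundle facility choice remain governed by the inner stage described above.
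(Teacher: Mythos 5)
Your two-stage plan (inner pair-coupling followed by an outer Srinivasan rounding for cardinality) runs into a genuine conflict between properties (iii) and (iv) that the proposed ``fix'' does not resolve. When the outer rounding operates on the combined pair-indicator ``some facility in $\calU_j\cup\calU_{j'}$ is opened,'' which you observe is deterministically $1$, that indicator contributes exactly $1$ to the sum that the outer rounding controls, yet the inner coupling may open \emph{two} facilities in that pair (this happens precisely with probability $\vol(\calU_j)+\vol(\calU_{j'})-1\geq 0$). Hence the cardinality budget of the outer rounding undercounts each pair by its excess, and the number of opened facilities can reach $k+|\calM|$, breaking (iv). Conversely, keeping the two bundle indicators separate in the outer rounding would restore the bookkeeping for (iv) but could zero out both bundles of a matched pair, breaking (iii). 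To repair this along your lines you would have to replace the pair-indicator by the pair's \emph{excess} variable $\vol(\calU_j)+\vol(\calU_{j'})-1\in[0,1]$, charge one deterministic facility per matched pair, and rederive the conditional within-pair and within-bundle selection probabilities so that both the facility marginals $y_i$ and the bundle marginals $\vol(\calU_j)$ survive the composition; none of that is in your write-up, and it is not obvious.

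The paper avoids this bookkeeping entirely by invoking the dependent rounding of Charikar and Li over a \emph{laminar family}: the singletons $\{i\}$ for $i\in\calF$, the bundles $\calU_j$, the matched pairs $\calU_j\cup\calU_{j'}$, and the full set $\calF$ form a laminar family (bundles are pairwise disjoint and $\calM$ is a matching on $\calC'$), and laminar dependent rounding simultaneously preserves every facility marginal while rounding the volume of every set in the family to one of its two nearest integers. Properties (i)--(iv) then follow in one shot: $\vol(\calU_j)\leq 1$ gives (ii), $\vol(\calU_j)+\vol(\calU_{j'})\geq 1$ (from the half-volume bound you also use) gives (iii), and $\vol(\calF)=k$ gives (iv). The paper's proof is essentially a citation stating these three prerequisites, with the $\vol(\calU_j)\geq 1/2$ bound deferred to the analysis of the specific clustering procedure.
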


%==================================================
%==================================================
%==================================================
\section{Rectangular Weight Vectors}\label{sec:one_rectangle_apx}
%==================================================
%==================================================
%==================================================

%We will now use the above framework to show the following.

% In this section, we apply the algorithmic framework described in the
% previous section, to obtain a constant-factor approximation algorithm
% for \probonerectangleorderedkowa, which generalizes \probkmedian and \probkcenter. 
% More specifically, we will show the following.

\begin{theorem}\label{thm:alg_for_one_rectangle}
There exists a polynomial time randomized algorithm for \probonerectangleorderedkowa
that gives a $15$-approximation in expectation.
\end{theorem}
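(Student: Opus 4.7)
My approach follows the framework of Section~\ref{sec:algor-fram}. I would enumerate over the $O(n|\calF|)$ distinct connection distances a candidate threshold $T$ whose intended value is the $\ell$-th largest connection cost of some fixed optimum solution $\OPT$. For a fixed $T$, define the reduced cost function $\cred_{ij}:=c_{ij}$ if $c_{ij}\geq T$, and $\cred_{ij}:=0$ otherwise; this is easily seen to be order-preserving and to satisfy $\cred\leq c$. For the correct guess, the value of $\OPT$ under $\cred$ is exactly $\cost_\ell(\OPT)$ (breaking distance ties in an arbitrary fixed way), so the optimum of \lpc{$\cred$} is a valid lower bound on $\cost_\ell(\OPT)$. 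I solve \lpc{$\cred$} to obtain a canonical distance-optimal solution $(x,y)$ in the sense of Section~\ref{sec:lp-relaxation} and feed it into the Charikar--Li rounding of Algorithm~\ref{alg:main}, operating in the original metric $c$.

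\textbf{Linearising the ordered objective.} The crux of the analysis is the deterministic inequality
\[
\cost_\ell(\calW)\;\leq\;\ell\cdot T\;+\;\sum_{j\in\calC}\bigl(c_j(\calW)-T\bigr)^{+},
\]
obtained by bounding each of the top-$\ell$ rounded distances by $T+(c_j(\calW)-T)^{+}$ and harmlessly completing the sum over the remaining clients. Since the top-$\ell$ connection costs of $\OPT$ are each at least $T$, we have $\ell T\leq\cost_\ell(\OPT)$, so the deterministic piece is absorbed for free. What remains is to bound the expected total excess $\mathbb{E}\bigl[\sum_j(c_j(\calW)-T)^{+}\bigr]$ by a constant times $\cost_\ell(\OPT)$.

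\textbf{Per-client two-budget charging.} To control the expected excess I would apply the Charikar--Li hop argument: with probability at least $\vol(\calU_{j'})\geq 1/2$ a facility in $j$'s cluster-centre bundle is opened, giving $c_j(\calW)\leq c_{jj'}+R_{j'}$; otherwise, by Lemma~\ref{lem:lp-round-depend}(\ref{item:prob-pair}), the matched partner $j''$'s bundle contains an opened facility, giving $c_j(\calW)\leq c_{jj'}+c_{j'j''}+R_{j''}$. Each individual hop length $d$ is decomposed as $\min(d,T)+(d-T)^{+}$: the truncated part is charged to a \emph{deterministic} budget paid from the $\ell T$ term, while the excess part is charged to a \emph{probabilistic} budget controlled by the reduced fractional cost $\cavr(j)$ through the standard Charikar--Li per-client estimates. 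The probabilistic contributions then sum to $O\bigl(\sum_j \cavr(j)\bigr) = O(\mathrm{LP}(\cred))\leq O(\cost_\ell(\OPT))$, and the deterministic contributions are forced to sum to $O(\ell T)$ using Lemma~\ref{lem:lp-round-depend}(\ref{item:prob-bundle})--(\ref{item:prob-pair}) to argue that only $O(\ell)$ bundles can actually fire a non-trivial deterministic charge in expectation.

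\textbf{Main obstacle.} The principal difficulty is precisely this last step: a naive per-client Charikar--Li bound produces an additive $T$ for \emph{every} client, which sums to $\Theta(nT)\gg\cost_\ell(\OPT)$. Avoiding this blow-up requires showing that the bundle-failure events which force an excess-producing long hop chain are rare in a fractional-mass-weighted sense, so that the total deterministic charge telescopes back to $O(\ell T)$; here I would exploit the marginal-probability guarantees of Lemma~\ref{lem:lp-round-depend} together with the total volume budget $\sum_i y_i = k$. Carefully tracking the constants through the cluster-centre hop, the bundle radius, and the matched-pair detour is what I expect to yield the claimed ratio~$15$. Since the outer guess of $T$ adds only a polynomial overhead, this produces a polynomial-time randomized $15$-approximation in expectation, as required.
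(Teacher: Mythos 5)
Your high-level plan is right: guess $T$ as the $\ell$-th largest distance of some optimum, define $\cred$ by truncation at $T$, solve \lpc{$\cred$}, and feed the canonical solution into the Charikar--Li rounding in the original metric. The split into a deterministic and a probabilistic budget per client is also the right idea, and this matches the paper.

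However, your \emph{linearization} step is a genuinely different move from the paper's argument, and as stated it does not go through. You write
$\cost_\ell(\calW)\leq\ell T+\sum_{j\in\calC}(c_j(\calW)-T)^{+}$,
and then want to bound $\expected{\sum_j(c_j(\calW)-T)^{+}}$ by a constant times $\sum_j\cavt{T}(j)$. But the per-client bound produced by the Charikar--Li hop analysis in the reduced cost setting is of the form $C_j\leq D_j+X_j$ where $D_j$ is a deterministic quantity of order $\Theta(T)$ (in the paper $D_j=5T$, accumulated from the $4T$ in the cluster-assignment bound $c_{jj'}\leq 4\cavt{T}(j)+4T$ plus one more $T$-radius hop) and $\expected{X_j}=\Oh(\cavt{T}(j))$. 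Then $(C_j-T)^+$ still carries an $\Oh(T)$ additive remainder for \emph{every} client, not just the ones in a ``failed'' bundle, so $\expected{\sum_j(C_j-T)^+}$ picks up a $\Theta(nT)$ term. Your proposed remedy --- that bundle-failure events are rare, or that only $\Oh(\ell)$ bundles ``fire a deterministic charge'' --- does not help, because the offending $\Theta(T)$ term comes from the cluster-center assignment step (line~\ref{alg:line:plus4T}), which applies to all clients deterministically, not from bundle-failure events. So this is a genuine gap, not a detail to ``carefully track.''

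The paper instead avoids $(\cdot-T)^+$ entirely. It keeps $D_j=5T$ for each $j$ and observes that, because $\cost_\ell(A)$ is a sum of only the top $\ell$ sorted connection costs $c_r^\to(A)=C_{\pi(r)}$,
$\cost_\ell(A)\leq\sum_{r=1}^\ell\bigl(D_{\pi(r)}+X_{\pi(r)}\bigr)\leq 5\ell T+\sum_{j\in\calC}X_j$,
and this is already enough after taking expectations. The key realization is purely combinatorial: only $\ell$ of the deterministic budgets can ever contribute to the rectangular objective, so the $\Theta(nT)$ issue never arises. You also omit a needed technical modification: the paper uses a \emph{dedicated} clustering rule that deletes clients with $c_{jj'}\leq 4\cavt{T}(j')+4T$ (not $4\cav(j')$), precisely to force $R>T$ for all matched pairs, which is what makes the inequality $\vol(\calU_{j'})\geq 1-\cavt{T}(j')/R$ --- and hence the $5R\cdot\cavt{T}(j')/R$ contribution to $\expected{X_j}$ --- valid. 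Without this modification the probabilistic budget bound breaks.
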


To proof this theorem, we need to fill in the
following two missing parts of the framework: Guessing of the reduced
cost space and the clustering procedure in the rounding part.

\subsection{Guessing and Reduced Costs}

In the LP-solving phase, we guess the value $T$
of $\ell$-th largest
distance in an optimum solution to \probonerectangleorderedkowa.  (This
is the smallest distance that is counted in the total connection cost
with non-zero weight.) As the correct guess of $T$ is the distance
between a client and a facility the guessing can be performed by
considering only $\Oh(mn)$ options for $T$.

For each $i\in\calF, j\in\calC$, we define the
\emph{reduced cost}
\begin{equation}\label{def-cijT}
 c^T_{ij} = 
 \begin{cases}
 c_{ij} \quad &\text{if} \quad c_{ij} \geq T,\\
 0 \quad &\text{otherwise},
 \end{cases}
\end{equation}
that will be used as a cost function in our LP for the \proborderedkmedian.

An optimal solution $(x,y)$ to \lpc{$c^T$} is a feasible solution for
\lpc{$c$} as well. As introduced in Section~\ref{sec:lp-relaxation},
we use $\cav(j) = \sum_{i \in \calF} x_{ij} \cdot c_{ij}$ and
$\cavt{T}(j) = \sum_{i \in \calF} x_{ij} \cdot c^T_{ij}$ to denote the
average connection cost and the average reduced connection cost of a
client $j\in\calC$, respectively.

 % We define
 % \emph{the support facilities of client $j$} by
 % $\calF_j = \{i \in \calF: x_{ij}>0 \}$.  For $\calF' \subseteq \calF$
 % we define the {\it volume} of $\calF'$ by
 % $\vol(\calF') = \sum_{i \in \calF'} y_i$. Note that
 % $\vol(\calF_j) \geq 1$ for any feasible solution and
 % $\vol(\calF_j) = 1$ for a solution required in~\ref{alg:solve_lp}-th
 % line of the Algorithm~\ref{alg:main}.

\subsection{Dedicated Clustering}
The following two clustering methods will be considered. We first analyze using Algorithm~\ref{alg:dedicated-clustering}.

\noindent
\begin{minipage}[h]{0.5\textwidth}
\null 
        \begin{algorithm}[H]
            \SetAlgorithmName{Algorithm}{}
            \DontPrintSemicolon 
            \SetNoFillComment
            \SetKwComment{tcc}{}{}
            \small

            \KwData{feasible fractional solution $(x,y)$ to \lpc{$c$}}
            \KwResult{set $\calC'\subseteq\calC$ of cluster centers}

            $\calC' \gets \emptyset$\;
            $\calC'' \gets \calC$\;
            $\underline{\cavt{T}(j)} \gets \sum_{i \in \calF} x_{ij} \cdot \underline{c_{ij}^{T}}$ for all $j\in\calC$; \hfill {\bf DIFF}

            \While{$\calC''$ is non empty}{
                take $j \in \calC''$ with the smallest \underline{$\cavt{T}(j)$}; \hfill{\bf DIFF}

                add $j$ to $\calC'$\;
                delete from $\calC''$ client $j$\;
                delete from $\calC''$ all clients $j'$ with $c_{jj'} \leq \underline{4\cavt{T}(j') + 4T}$ \hfill {\bf DIFF}\label{alg:line:plus4T}
            }

            \Return $\calC'$
            \caption{DedicatedClustering}\label{alg:dedicated-clustering}
        \end{algorithm}
\end{minipage}%
\begin{minipage}[h]{0.5\textwidth}
\null
        \begin{algorithm}[H]
            \SetAlgorithmName{Algorithm}{}
            \DontPrintSemicolon 
            \SetNoFillComment
            \SetKwComment{tcc}{}{}
            \small

            \KwData{feasible fractional solution $(x,y)$ to \lpc{$c$}}
            \KwResult{set $\calC'\subseteq\calC$ of cluster centers}

            $\calC' \gets \emptyset$\;
            $\calC'' \gets \calC$\;
            $\underline{\cav^{\phantom{T}}(j)} \gets \sum_{i \in \calF} x_{ij} \cdot \underline{c_{ij}^{\phantom{T}}}$ for all $j\in\calC$\;
            \While{$\calC''$ is non empty}{
                take $j \in \calC''$ with the smallest \underline{$\cav^{\phantom{T}}(j)$}\;
                add $j$ to $\calC'$\;
                delete from $\calC''$ client $j$\;
                delete from $\calC''$ all clients $j'$ with $c_{jj'} \leq \underline{4\cav^{\phantom{T}}(j')}\phantom{ + 4T}$ \label{alg:line:clustering}
            }

            \Return $\calC'$
            \caption{ObliviousClustering}\label{alg:oblivious-clustering}
        \end{algorithm}
\end{minipage}

This clustering procedure is very similar to the one of
Charikar and Li (see also Section~\ref{sec:oblivious-clustering} below) except for the fact that the procedure needs to know the threshold $T$ of the guessing phase. (Note that we use $T$ explicitly but also implicitly in the
average reduced cost $\cavt{T}(j)$.) This dependence allows a simpler and
better analysis for \probonerectangleorderedkowa. In Section~\ref{sec:oblivious-clustering}, we will describe how to get
rid of this dependency, which allows us to generalize the result.

\subsection{Analysis of the Algorithm}
In the following we analyze Algorithm~\ref{alg:main} using the procedure DedicatedClustering.

%The running time of the algorithm is polynomial.  More precisely, we
%iterate over $\Oh(mn)$ guesses of $T$.  For each guess of $T$, we
%compute in $\Oh(mn)$ the reduced cost function $\cred$ and solve
%\lpc{$\cred$} in polynomial time. The LP-rounding part of Charikar and
%Li is already known to run in polynomial time.
%
%In what follows,  we prove the performance guarantee of the algorithm.

%The following observation is immediate.
\begin{observation}
  We have $c^T_{ij}\leq c_{ij}\leq c^T_{ij}+T$ for any $i \in \calF, j \in \calC$
  and consequently $\cavt{T}(j) \leq \cav(j)\leq \cavt{T}(j)+T$.
\end{observation}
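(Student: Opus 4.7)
The plan is to verify both pointwise inequalities directly from the definition of $c^T_{ij}$, and then lift them to the averages using linearity together with the LP-feasibility constraint $\sum_{i\in\calF}x_{ij}=1$.

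First I would do a case split on whether $c_{ij}\geq T$. In the case $c_{ij}\geq T$, by definition $c^T_{ij}=c_{ij}$, so trivially $c^T_{ij}\leq c_{ij}\leq c^T_{ij}+T$. In the case $c_{ij}<T$, by definition $c^T_{ij}=0$, so $c^T_{ij}=0\leq c_{ij}$ (since costs are nonnegative) and $c_{ij}<T=c^T_{ij}+T$. This establishes the pointwise statement for all $i\in\calF$ and $j\in\calC$.

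Second, to obtain the inequalities for the averages, I would multiply through by $x_{ij}\geq 0$ and sum over $i\in\calF$. For the lower bound this gives
\[
\cavt{T}(j)=\sum_{i\in\calF}x_{ij}\,c^T_{ij}\;\leq\;\sum_{i\in\calF}x_{ij}\,c_{ij}=\cav(j).
\]
For the upper bound, combining with constraint~\eqref{lp:fully_served},
\[
\cav(j)=\sum_{i\in\calF}x_{ij}\,c_{ij}\;\leq\;\sum_{i\in\calF}x_{ij}\bigl(c^T_{ij}+T\bigr)=\cavt{T}(j)+T\sum_{i\in\calF}x_{ij}=\cavt{T}(j)+T.
\]
There is no real obstacle here; the only step worth flagging is the use of LP-feasibility $\sum_{i\in\calF}x_{ij}=1$ to turn the additive slack $T$ per facility into a single additive $T$ overall. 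Everything else is a direct unfolding of the definition of $c^T_{ij}$ in~\eqref{def-cijT}.
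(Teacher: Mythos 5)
Your proof is correct and is the straightforward unfolding that the paper leaves implicit (the statement is given as an unproven observation). The case split on $c_{ij}\gtrless T$ establishes the pointwise bounds, and the passage to $\cavt{T}(j)$ and $\cav(j)$ via $x_{ij}\geq 0$ and $\sum_{i\in\calF}x_{ij}=1$ is exactly right; no further justification is needed.
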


%\todo{JS: I think, we should add somewhere an explanation how close our proofs are similar to the one of Charikar and Li and in which respect it is different.}\kso{I agree.}
\begin{lemma}\label{lem:filtering}
  The following two statements are true
  \begin{enumerate}[(i)]
  \item\label{item:filtering-spread} For any $j,j'\in\calC'$ we have that $c_{jj'}>4\max(\cavt{T}(j),\cavt{T}(j'))+4T$.
  \item\label{item:filtering-assign} For any $j\in\calC\setminus\calC'$
   there is a client $j'\in\calC'$ with $\cavt{T}(j')\leq\cavt{T}(j)$ and $c_{jj'}\leq 4\cavt{T}(j)+4T$.
  \end{enumerate}
\end{lemma}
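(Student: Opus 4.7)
The plan is to read both statements directly off the invariants maintained by the while loop in Algorithm~\ref{alg:dedicated-clustering}. The algorithm processes the remaining clients in non-decreasing order of $\cavt{T}$, adds the current minimizer $j$ to $\calC'$, and then removes from the pool $\calC''$ every client $j'$ satisfying the deletion condition $c_{jj'}\leq 4\cavt{T}(j')+4T$ on line~\ref{alg:line:plus4T}. Both parts will follow by combining the greedy ordering rule with this deletion condition.

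For part (i), I would take $j,j'\in\calC'$ and assume without loss of generality that $j$ was added to $\calC'$ strictly before $j'$. The greedy selection rule then gives $\cavt{T}(j)\leq\cavt{T}(j')$, so $\max(\cavt{T}(j),\cavt{T}(j'))=\cavt{T}(j')$. At the iteration where $j$ was added, $j'$ was present in $\calC''$ and survived the deletion step; by the deletion rule on line~\ref{alg:line:plus4T} this survival is precisely the statement $c_{jj'}>4\cavt{T}(j')+4T$, which is the desired bound.

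For part (ii), let $j\in\calC\setminus\calC'$. Since every client not placed in $\calC'$ must have been deleted from $\calC''$, there is an iteration at which some $j'$ was added to $\calC'$ and $j$ was removed as a neighbor of $j'$. Applying the deletion rule on line~\ref{alg:line:plus4T} with center $j'$ and deleted client $j$ yields $c_{jj'}\leq 4\cavt{T}(j)+4T$. Moreover, $j$ was still in $\calC''$ at the beginning of that iteration, while $j'$ was chosen as the element of $\calC''$ of smallest $\cavt{T}$, so $\cavt{T}(j')\leq\cavt{T}(j)$.

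The argument is essentially bookkeeping on the greedy rule, so I do not anticipate any real obstacle; the only point requiring care is to correctly distinguish which client plays the role of the cluster center and which plays the role of the deleted neighbor in the deletion condition $c_{jj'}\leq 4\cavt{T}(j')+4T$, since the symbol $j'$ in the algorithm's pseudocode denotes the deleted client whereas in the lemma $j'$ denotes the cluster center.
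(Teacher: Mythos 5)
Your proof is correct and follows essentially the same approach as the paper: both parts are read off directly from the greedy selection rule (smallest $\cavt{T}$ first) and the deletion condition on line~\ref{alg:line:plus4T}, with part (i) coming from the fact that a later cluster center must have survived the earlier center's deletion step. Your explicit remark about the role-swap between the algorithm's $j'$ (deleted client) and the lemma's $j'$ (cluster center) is accurate and a useful clarification that the paper leaves implicit.
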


\begin{lemma}\label{lem:bundles}
  The following two statements are true
  \begin{enumerate}[(i)]
  \item\label{item:bundles-volume} $\vol(\calU_j) \geq 0.5$ for all $j\in\calC'$
  \item\label{item:bundles-disjoint} $\calU_j\cap\calU_{j'}=\emptyset$ for all
   $j,j'\in\calC'$, $j\neq j'$
  \end{enumerate}
\end{lemma}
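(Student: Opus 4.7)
The plan is to prove the two parts separately, relying mainly on the spreading property from Lemma~\ref{lem:filtering}(\ref{item:filtering-spread}) together with the definitions $R_j=\tfrac12\min_{j'\in\calC',j'\neq j}c_{jj'}$ and $\calU_j=\{i\in\calF:c_{ij}<R_j,\ x_{ij}>0\}$. Throughout I will use that the solution is canonical, i.e.\ $x_{ij}\in\{0,y_i\}$, so that $\vol(\calU_j)=\sum_{i\in\calU_j}y_i=\sum_{i\in\calU_j}x_{ij}$.

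For the disjointness claim (\ref{item:bundles-disjoint}), I would give a direct triangle inequality argument. Suppose toward contradiction that some $i\in\calU_j\cap\calU_{j'}$ with $j\neq j'$ in $\calC'$. Then $c_{ij}<R_j\leq\tfrac12 c_{jj'}$ and $c_{ij'}<R_{j'}\leq\tfrac12 c_{jj'}$ by definition of $R_j,R_{j'}$, so $c_{jj'}\leq c_{ij}+c_{ij'}<c_{jj'}$, a contradiction. This part is standard and should be a couple of lines.

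For the volume bound (\ref{item:bundles-volume}), the key idea is a Markov-style filtering in the reduced-cost space combined with the spreading guarantee of the clustering procedure. Fix $j\in\calC'$. Viewing the LP values $(x_{ij})_{i\in\calF}$ as a probability distribution (they sum to $1$ by \eqref{lp:fully_served}), Markov's inequality applied to $\cavt{T}(j)=\sum_i x_{ij}c^T_{ij}$ yields $\sum_{i:\,c^T_{ij}\geq 2\cavt{T}(j)} x_{ij}\leq \tfrac12$, so the $x$-mass on facilities with $c^T_{ij}<2\cavt{T}(j)$ is at least $\tfrac12$. For any such $i$, the preceding observation gives $c_{ij}\leq c^T_{ij}+T<2\cavt{T}(j)+T$. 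Now I invoke Lemma~\ref{lem:filtering}(\ref{item:filtering-spread}): for every $j'\in\calC'\setminus\{j\}$ we have $c_{jj'}>4\cavt{T}(j)+4T$, hence $R_j>2\cavt{T}(j)+2T>2\cavt{T}(j)+T>c_{ij}$, so $i\in\calU_j$. Summing $y_i=x_{ij}$ over these facilities gives $\vol(\calU_j)\geq\tfrac12$.

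The main obstacle, as I see it, is making sure the ``$+T$ slack'' propagates correctly. The filtering uses reduced costs $c^T$ (so that the LP objective that was actually optimized appears), but the ball $\calB(j,R_j)$ lives in the original metric $c$. Bridging the two requires the additive $T$ from the observation, and it is crucial that the clustering procedure in line~\ref{alg:line:plus4T} of Algorithm~\ref{alg:dedicated-clustering} includes the $+4T$ term precisely so that Lemma~\ref{lem:filtering}(\ref{item:filtering-spread}) yields $R_j>2\cavt{T}(j)+2T$, which strictly dominates the Markov bound $2\cavt{T}(j)+T$ on $c_{ij}$. Checking that these constants line up is the step where a careless choice would break the $\tfrac12$ volume guarantee.
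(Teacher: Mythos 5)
Your proof is correct and follows essentially the same idea as the paper's. For part~(\ref{item:bundles-disjoint}) both use the triangle inequality with $R_j,R_{j'}\leq\tfrac12 c_{jj'}$; for part~(\ref{item:bundles-volume}) both rely on Lemma~\ref{lem:filtering}(\ref{item:filtering-spread}) providing the crucial slack $R_j>2\cavt{T}(j)+2T$ combined with a Markov-type filtering bound — the only difference is that you apply Markov to $\cavt{T}(j)$ and then translate each $c^T_{ij}$ to $c_{ij}$ via the observation, whereas the paper first converts $\cavt{T}(j)+T\geq\cav(j)$ and applies the Markov bound directly to $\cav(j)$ (a minor reorganization); you should also briefly dispose of the degenerate case $\cavt{T}(j)=0$, where Markov's inequality is vacuous but the conclusion is immediate since then every $i$ with $x_{ij}>0$ has $c_{ij}<T<R_j$.
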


\begin{proof}[Proof of Theorem~\ref{thm:alg_for_one_rectangle}.]
  Let $\OPT$, $\OPTLP$ and $\ALG$ be the values of an optimal solution,
  \lpc{$c_r$} and Algorithm~\ref{alg:main}, respectively.
  Note that $\OPTLP \leq \OPT$.
  
  The idea of the proof is to provide for each client an upper bound
  on the distance $C_j$ (according to the original distance $c$) traveled by
  this client. The upper bound is paid for by two budgets $D_j$ and $X_j$. The
  ``deterministic'' budget $D_j$ is $5T$. The ``probabilistic''
  budget $X_j$ is a random variable (depending on the random choices
  made by the algorithm).

  We will show below that (by a suitable choice of $X_j$) the
  connection cost $C_j$ of $j$ can actually be upper bounded by
  $D_j+X_j$ and that $\expected{X_j} \leq 10 \cdot \cavt{T}(j)$.  We
  claim that this will complete our proof of a 15-approximation.  To
  this end, note that at most $\ell$ clients $j$ will pay the
  deterministic budget $D_j=5T$ because at most $\ell$ distances are
  actually accounted for in the objective function. Unfortunately, an
  analogous reasoning does not hold true for the expected value of the
  random variables $X_j$.
  (For example, note that $\expected{\max (X_1,\dots,X_n)}$ is
  generally unbounded in $\max(\expected{X_1},\dots,\expected{X_n})$
  in the case of $\ell=1$.)  However, we can just sum over \emph{all}
  those random variables obtaining an upper bound on the total
  expected connection cost:
\begin{displaymath}
  \expected{\ALG} \leq D_j \cdot \ell + \sum_{j \in C} \expected{X_j} \leq
  5\ell\cdot T+10 \cdot \sum_{j\in C}\cavt{T}(j)\leq 15 \cdot \OPT.
\end{displaymath}
For the last inequality note that by our guess of $T$ we have that
$\textnormal{OPT}\geq\ell\cdot T$ and from the~definition of \lpc{$c_r$}
we have $\OPTLP = \sum_{j\in C}\cavt{T}(j)$.
To show that $C_j \leq D_j + X_j$ consider an arbitrary client $j$ with connection
cost $C_j$. We incrementally construct our upper bound on $C_j$
starting with $0$.  Each increment will be either charged to $D_j$ or
$X_j$.

Consider a client $j$ and the cluster center $j'$ it is assigned to
(possibly $j=j'$).  We have that $c_{jj'}\leq 4\cavt{T}(j)+4T$ by
Lemma~\ref{lem:filtering}~(\ref{item:filtering-assign}). We charge
$4T$ to $D_j$ and $4\cavt{T}(j)$ with probability~1 to $X_j$.

We now describe how to pay for the transport from $j'$ to an open
facility. There are two cases to distinguish.
%\kso{Jarek propose to make clear case distinction: 1, 2, 2a, 2b}
Either a facility
within a radius $T$ around $j'$ is opened or not. If yes, then this cost can be
covered by charging an additional amount of $T$ to $D_j$.
In this case the total cost is upper bounded by $D_j = 5T$ and
$\expected{X_j} = 4\cavt{T}(j)$.

If no facility within a radius $T$ around $j'$ is opened then observe that for
all facility $i$ with $c_{ij'}\geq T$ we have that
$c^T_{ij'}=c_{ij'}$.  We now continue to bound the connection cost for
this case.  Let $j''$ be the closest client distinct from $j'$ in
$\calC'$. We consider the case where $j'$ and $j''$ are not
matched. (The case where they are matched is simpler.) Let $j'''$ be
the client in $\calC'$ to which $j''$ is matched, i.e., $(j'',j''') \in \calM$.
By the dependent
rounding process one facility in $\calU_{j''}\cup\calU_{j'''}$ will be
opened.  We have that $c_{j'j''}=2R_{j'}=2R$ and thus $c_{j''j'''}\leq 2R$ and $R_{j''},R_{j'''}\leq R$
(otherwise, $j''$ would not have been matched with $j'''$ but with $j'$).

This means that, in case no facility is opened in the bundle $\calU_{j'}$
the client $j$ travels an additional distance of at most
$\max(c_{j'j''} + R_{j''}, c_{j'j''} + c_{j''j'''} + R_{j'''}) \leq 2R + 2R + R\leq 5R$.

If a facility is opened in the bundle $\calU_{j'}$ then we charge this
additional connection cost to $X_j$. The contribution of this case to $\expected{X_j}$ is (by Properties~(\ref{item:prob-facility}) and~(\ref{item:prob-bundle}) of Lemma~\ref{lem:lp-round-depend})
at most 
\begin{displaymath}
  \sum_{i \in \calU_{j'} \setminus \calB(j',T)} \hspace{-15pt} y_ic_{ij'} =
  \sum_{i \in \calU_{j'} \setminus \calB(j',T)} \hspace{-15pt} x_{ij'}c_{ij'} =
  \sum_{i \in \calU_{j'} \setminus \calB(j',T)} \hspace{-15pt} x_{ij'}c^T_{ij'} \leq
  \sum_{i \in \calF_{j'}} x_{ij'}c^T_{ij'} = \cavt{T}(j')\,.
\end{displaymath}
Here, the first equality follows by Property~(\ref{item:client-fully-assigned}) of Lemma~\ref{lem:canonical-solution}.
The second equality follows because we assume that no facility is opened in $\calB(j',T)$ and since $c_{ij'}=c^T_{ij'}$ for all $i\in\calU_{j'}\setminus \calB(j',T)$.

We finally handle the case where no facility in $\calU_{j'}$ is opened
and where $j$ additionally travels a distance of at most $5R$.  We
charge this addtional cost to $X_j$. We bound the probability that
this case occurs. We claim that $\vol(\calU_{j'})$ is at least
$1-\cavt{T}(j')/R$. To see this, recall that
$2R \geq c_{j''j'''} > 4\max(\cavt{T}(j''),\cavt{T}(j'''))+4T$ thus $R>T$.
Note that the reason of adding the quantity $4T$ in the clustering phase
(line~\ref{alg:line:plus4T}) is to have the property $R > T$ (in the original algorithm of Charikar-Li~\cite{CharikarL12:k-median} this property is not necessarily satisfied).
Using this, for all facilities in $\calF_{j'}\setminus\calU_{j'}$
we have that $c^T_{ij'}=c_{ij}$ because $R>T$.
Hence
\begin{align*}
 \cavt{T}(j') &\geq \sum_{i \in \calF_{j'} \setminus \calB(j',T)} \hspace{-15pt} x_{ij'} \cdot c^T_{ij'} =
 \sum_{i \in \calF_{j'} \setminus \calB(j',T)} \hspace{-15pt} x_{ij'} \cdot c_{ij'} \geq
 \sum_{i \in \calF_{j'} \setminus \calU_{j'}} \hspace{-10pt} x_{ij'} \cdot c_{ij'} \\
 &\geq R \cdot \hspace{-10pt} \sum_{i \in \calF_{j'} \setminus \calU_{j'}} x_{ij'} =
 R \cdot \vol(\calF_{j'} \setminus \calU_{j'}) =
 R \cdot (1-\vol(\calU_{j'})),
\end{align*}
which implies the claim. Here, note that $\calB(j',T)  \subseteq \calU_{j'}$ because $R>T$.
This means that $j$ travels the additional distance of $5R$ with probability
at most $\cavt{T}(j')/R$ and hence the contribution to $\expected{X_j}$ is
upper bounded by $5 \cdot \cavt{T}(j')$.
Summarizing, for the case when no facility is opened within $\calB(j',T)$ we can upper bound
$\expected{X_j}$ by:
\begin{itemize}
 \item a cost of serving client $j$ through
 the closest cluster center $j'$ that is $4 \cdot \cavt{T}(j)$, plus
 \item a value $\cavt{T}(j')$ for the case when a facility is opened within bundle $\calU_{j'}$, plus
 \item a value $5R$ with probability at most $\cavt{T}(j')/R$ when no facility is opened within $\calU_{j'}$.
\end{itemize}
Hence $\expected{X_j} \leq 4 \cdot \cavt{T}(j) \cdot 1 + \cdot \cavt{T}(j') \cdot
1 + 5R \cdot \cavt{T}(j')/R \leq 10 \cdot \cavt{T}(j),$
by taking into account that $\cavt{T}(j')\leq\cavt{T}(j)$.
Moreover we charged again at most $5T$ to $D_j$ in this case.
In the end we have the desired two upper bounds for both budgets for completing the proof:
$D_j \leq 5T, \; \expected{X_j} \leq 10 \cdot \cavt{T}(j)$.
\end{proof}

\subsection{Oblivious Clustering}\label{sec:oblivious-clustering}
In Algorithm~\ref{alg:main}, we are working on the original metrics
$c$ but still DedicatedClustering described in the
previous section depends on our guessed parameter $T$ and the reduced
metrics $c^T$.  In this section, we show that we can apply the
original clustering of Charikar and Li that works solely on the
input metrics $c$ and that is thus \emph{oblivious} of the guessing phase.
In particular, we use the Oblivious Clustering procedure as described in
Algorithm~\ref{alg:oblivious-clustering}.

Using Oblivious Clustering, we can prove the following version of
Theorem~\ref{thm:alg_for_one_rectangle}. While the constants proven in
the following lemma are weaker than the ones for DedicatedClustering, it exhibits a
surprising modularity that is a key ingredient to later handle the
general case.  In particular, the clustering (and thus the
whole rounding phase) are unaware (oblivious) of the cost vector
$\bar{c}$ with respect to which we optimized
\lpc{$\bar{c}$}. Secondly, the bound proven in the lemma holds for
\emph{any} rectangular objective function of \proborderedkmedian
(specified by parameter $\ell$), threshold $T$ and the corresponding
average reduced cost and may be unrelated to the cost function
$\bar{c}$ that we optimized to obtain the fractional solution $(x,y)$.

\begin{lemma}\label{lem:oblivious-clustering}
  Consider a feasible fractional solution $(x,y)$ to $\lpc{c}$ where
  $x$ is distance-optimal.  Let
  $\ell\geq 1$ be a positive integer, let $T\geq 0$ be arbitrary.
  Then we have
  $\expected{\cost_\ell(A)}\leq 19\ell T+19\sum_{j\in C}\cavt{T}(j)$
  where $A$ is the (random) solution output by the Algorithm~\ref{alg:main} with \emph{oblivious} clustering.
\end{lemma}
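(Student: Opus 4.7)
The plan is to adapt the proof of Theorem~\ref{thm:alg_for_one_rectangle} by replacing DedicatedClustering with ObliviousClustering. The central difficulty is that ObliviousClustering operates on $\cav$ instead of $\cavt{T}$: for each client $j$ with cluster center $j' \in \calC'$ we only get $\cav(j') \leq \cav(j)$ rather than $\cavt{T}(j') \leq \cavt{T}(j)$. Combined with the elementary fact that $\cav(j) \leq \cavt{T}(j) + T$ (from the observation just before Lemma~\ref{lem:filtering}), this introduces an additive $T$-slack per client in every bound derived from the clustering, and the real hurdle is to absorb these slacks globally into $\Oh(\ell T)$ rather than the naive $\Oh(nT)$.

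First I would establish cluster-level analogues of Lemmas~\ref{lem:filtering} and~\ref{lem:bundles} for ObliviousClustering: $c_{jj'} \leq 4\cav(j) \leq 4\cavt{T}(j)+4T$, $c_{j'j''} > 4\max(\cav(j'),\cav(j''))$ for distinct $j',j''\in\calC'$, and the volume/disjointness properties of bundles (their proofs only use filter-style separation and carry over). In particular $\cavt{T}(j') \leq \cav(j') \leq \cavt{T}(j)+T$. I would then split on $R := R_{j'}$. If $R \geq T$, the argument of Theorem~\ref{thm:alg_for_one_rectangle} applies essentially verbatim and gives that the expected ``last-mile'' cost beyond the threshold $T$ is at most $6\cavt{T}(j')$ (the key inequality $1-\vol(\calU_{j'}) \leq \cavt{T}(j')/R$ crucially needs $R \geq T$). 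If $R < T$, the matched-pair argument yields an opened facility within $5R < 5T$ of $j'$, charged deterministically. Each client then admits a decomposition $C_j \leq D + 4\cavt{T}(j) + Y_j$ with $D = \Oh(T)$ a uniform constant and $\expected{Y_j} \leq 6\cavt{T}(j')$ (nonzero only when $R \geq T$).

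Summing via the top-$\ell$ subadditivity $\sum_{\text{top}_\ell}(a_j+b_j) \leq \sum_{\text{top}_\ell} a_j + \sum_j b_j$ for non-negative $b_j$, one obtains $\expected{\cost_\ell(A)} \leq \ell D + 4\sum_j \cavt{T}(j) + 6\sum_j \cavt{T}(j')$. The main technical obstacle is bounding $\sum_j \cavt{T}(j')$: the per-cluster charging $|S(j')|\cdot\cav(j') \leq \sum_{j\in S(j')}\cav(j)$ (where $S(j')$ denotes the cluster of $j'$), combined with $\cav(j')\leq \cavt{T}(j')$ is not quite right---we only get $\sum_j \cavt{T}(j') \leq \sum_j\cav(j) \leq \sum_j\cavt{T}(j) + \sum_j(\cav(j)-\cavt{T}(j))$, and the ``below-$T$'' residual $\sum_j(\cav(j)-\cavt{T}(j))$ can be as large as $nT$ in the worst case. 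The most delicate step is thus a careful redistribution of these $T$-slacks either into the $\Oh(\ell T)$ deterministic budget (permitted because $\cost_\ell$ sums only the top $\ell$ connection costs) or into the per-client $\cavt{T}(j)$ mass, exploiting that $|S(j')|\cdot T$ contributions from clusters containing some client with small $\cavt{T}$ telescope against the $\cavt{T}$-sum, and that the random $Y_j$'s are constant within a cluster so that their top-$\ell$ sum is bounded by at most $\ell$ per-cluster contributions. The coefficient $19$ in the statement reflects the additive losses accumulated through this reallocation.
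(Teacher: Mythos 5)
Your outline gets the framework right---deterministic budget $D_j=\Oh(T)$, probabilistic budget $X_j$, case split on whether a facility opens near $j'$, and the observation that $R$ need not exceed $T$ under oblivious clustering---and you correctly pinpoint the crux: oblivious clustering only gives $\cav(j')\leq\cav(j)$, whereas the analysis needs to compare $\cavt{T}(j')$ (or the restricted quantity $\cfarr(j')$) against $\cavt{T}(j)$, and the naive chain $\cavt{T}(j')\leq\cav(j')\leq\cav(j)\leq\cavt{T}(j)+T$ accumulates an additive $T$ per client.

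However, the step you identify as ``the most delicate'' is left entirely vague, and the direction you sketch does not match how the paper closes the gap. You propose a \emph{global} reallocation of the $nT$ worst-case slack across clusters via ``telescoping'' and the constancy of $Y_j$ within a cluster; but the top-$\ell$ sum of $C_j$ is over a \emph{random} set of clients, so there is no clean way to argue that only $\ell$ clusters contribute to $\sum_j\expected{Y_j}$, and the cluster-level charging you write down is exactly the one you then admit ``is not quite right.'' The paper instead achieves a \emph{per-client} bound $\expected{X_j}\leq c_2\cavt{T}(j)$, with no residual $T$-slack, by introducing a second threshold $\alpha T$ and splitting on $c_{jj'}>\alpha T$ versus $c_{jj'}\leq\alpha T$. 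In the first sub-case, $c_{jj'}>\alpha T$ forces $T$ itself to be bounded by a constant multiple of $\cavt{T}(j)$, which absorbs the additive $T$. In the second sub-case---which is the genuinely hard one---the paper uses the \emph{distance-optimality} of $x$ (a hypothesis stated explicitly in the lemma, and never invoked anywhere in your proposal): since $j$ and $j'$ are within $\alpha T$ of each other and all facilities in $\calB(j',\beta T)$ are strictly closer to $j'$ than facilities outside the ball, $j'$ assigns no more demand outside the ball than $j$ does, and reassigning $j'$'s out-of-ball demand to mirror $j$'s assignment (feasible by this dominance, suboptimal but an upper bound) shows $\cfarr(j')\leq\frac{\beta}{\beta-\alpha}\cavt{T}(j)$ by the triangle inequality. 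This distance-optimality argument is the missing idea, and without it your analysis leaves the $nT$ term unaccounted for.
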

\noindent
By Lemma~\ref{lem:oblivious-clustering} we obtain that our algorithm with oblivious clustering yields a $38$-approximation.

% \begin{theorem}\label{thm:alg_for_one_rectangle_oblivious}
% Algorithm~\ref{alg:main} with oblivious clustering (Algorithm~\ref{alg:oblivious-clustering}) is
% a polynomial time randomized algorithm for \probonerectangleorderedkowa
% that gives $38$-approximation in expectation.
% \end{theorem}
% \begin{proof}
%  We need just to apply $\alpha = 2, \beta = 3$ in Lemma~\ref{lem:oblivious-clustering}
%  to get $38$-approximation ratio.
% \end{proof}

%===============================================================================
%===============================================================================
%===============================================================================
\section{Handling the General Case}\label{sec:general-weights}
%===============================================================================
%===============================================================================
%===============================================================================

Consider an arbitrary instance of \proborderedkmedian.  Let $w$ be the
weight vector and let $\bar{w}$ the sorted weight vector using the
same weights as $w$ but without repetition. Let $R$ be the number of
distinct weight in both weight vectors.  W.l.o.g.\ we assume that all
distances $c_{ij}$ for $i\in\calF,j\in\calC$ are pairwise
distinct. (This can be achieved by slightly perturbing the input
distances.)  To apply our algorithmic framework, we guess {\it
  tresholds} $T_r$ for $r=1,\dots,R$ such that $T_r$ is the smallest
distance $c_{ij}$ that is multiplied by weight of value $\bar{w}_r$ in some
fixed optimum solution. To guess the thresholds $T_r$ we check $(nm)^R$ many
candidates.  Additionally, we define $T_0 = \infty$.  We have
$T_r < T_{r-1}$ for $r=1,\dots,R$ because we assumed pairwise distinct
distances.  For each $i\in\calF, j\in\calC$ we assign the connection cost
$c_{ij}$ to the weight $w(i,j) = \bar{w}_r$, where
$T_{r} \leq c_{ij} < T_{r-1}$.
This leads us to the following definition of our reduced cost function
$\cred_{ij} = c_{ij} \cdot w(i,j)$ for all $i\in\calF, j\in\calC$.
We compute an optimal solution $(x,y)$ to $\lpc{\cred}$ and apply 
Algorithm~\ref{alg:main} to $(x,y)$.

\begin{lemma}\label{lem:general-reduction-to-one-rect}
  The above-described algorithm is a $38$-approximation algorithm for
  \proborderedkmedian that makes $\Oh((nm)^{R})$ many calls to
  Algorithm~\ref{alg:main} with oblivious clustering, where $R$ is the
  number of distinct weights in the weight vector $w$.
\end{lemma}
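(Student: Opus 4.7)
The plan is to reduce the general problem to rectangular objectives via a convex decomposition of the weight vector $w$ and then invoke the oblivious clustering bound (Lemma~\ref{lem:oblivious-clustering}) once for every distinct weight value. Let $\bar{w}_1 > \bar{w}_2 > \cdots > \bar{w}_R$ be the distinct weights in $w$, let $\ell_r$ be the largest index with $w_{\ell_r} \geq \bar{w}_r$, and set $\bar{w}_{R+1}=0$. The non-increasing structure of $w$ yields $w_j = \sum_{r=1}^R (\bar{w}_r - \bar{w}_{r+1})\,\mathbf{1}[j \leq \ell_r]$, from which I obtain for every facility set $\calW$ the identity $\cost(\calW) = \sum_{r=1}^R (\bar{w}_r - \bar{w}_{r+1})\cost_{\ell_r}(\calW)$.

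Feeding this into Lemma~\ref{lem:oblivious-clustering} with $\ell = \ell_r$ and $T = T_r$, applied to the solution $(x,y)$ of $\lpc{\cred}$ (which is automatically feasible for $\lpc{c}$ and can be taken distance-optimal by the discussion in Section~\ref{sec:algor-fram}), gives
\[\expected{\cost(A)} \leq 19\sum_{r=1}^R (\bar{w}_r - \bar{w}_{r+1})\,\ell_r\, T_r + 19\sum_{r=1}^R (\bar{w}_r - \bar{w}_{r+1})\sum_{j\in\calC}\cavt{T_r}(j),\]
so the task reduces to bounding each of the two double sums by $\OPT$.

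For the combinatorial sum, the correct guess defines $T_r$ as the $\ell_r$-th largest distance in a fixed optimum facility set $\calW^*$, hence $\ell_r T_r \leq \cost_{\ell_r}(\calW^*)$. Plugging this into the same decomposition identity reproduces exactly $\cost(\calW^*) = \OPT$. For the fractional sum, I would exchange the order of summation and handle each pair $(i,j)$ separately. Since $c^{T_r}_{ij}$ equals $c_{ij}$ precisely when $T_r \leq c_{ij}$, the inner sum $\sum_r (\bar{w}_r - \bar{w}_{r+1}) c^{T_r}_{ij}$ telescopes to $c_{ij}\bar{w}_{r^*}$, where $r^*$ is the unique index with $T_{r^*}\leq c_{ij}<T_{r^*-1}$. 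By construction of the reduced cost, $\cred_{ij} = c_{ij}\bar{w}_{r^*}$, so the entire fractional sum equals $\sum_{i,j} x_{ij}\cred_{ij} = \OPTLP$. Since $\cred$ is monotone non-decreasing in $c$, the integral solution induced by $\calW^*$ with closest-facility assignments under $c$ is a feasible LP-solution whose value under $\cred$ telescopes the same way back to $\OPT$, establishing $\OPTLP \leq \OPT$.

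Summing the two bounds yields the claimed factor $38$. The running time statement is immediate: the algorithm enumerates all $\Oh((nm)^R)$ candidate tuples $(T_1,\dots,T_R)$, each $T_r$ ranging over the $nm$ facility-client distances, and for each candidate performs one LP-solve plus one oblivious rounding, returning the cheapest result. The main obstacle I anticipate is the telescoping argument for the fractional part: one must carefully align the definition of $\cred_{ij}$ with the partial sums of $(\bar{w}_r - \bar{w}_{r+1})c^{T_r}_{ij}$, and verify that the same telescoping applied to the LP-solution induced by $\calW^*$ reproduces $\OPT$. This is precisely where the ``modularity'' property of Lemma~\ref{lem:oblivious-clustering} pays off: a single fractional solution simultaneously certifies all $R$ rectangular bounds, and they recombine to match exactly the LP objective we actually optimized.
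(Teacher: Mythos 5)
Your proof is correct and follows the same overall strategy as the paper: decompose the weight vector into rectangular pieces $\cost(\cdot) = \sum_r (\bar{w}_r - \bar{w}_{r+1})\cost_{\ell_r}(\cdot)$, apply Lemma~\ref{lem:oblivious-clustering} to each piece, and bound the resulting combinatorial and fractional sums by $\OPT$. The only minor stylistic difference is in the combinatorial sum: you observe directly that $\ell_r T_r \leq \cost_{\ell_r}(\calW^*)$ and feed this back into the decomposition identity, while the paper reaches the same inequality $\sum_r (\bar{w}_r - \bar{w}_{r+1})\ell_r T_r \leq \OPT$ via an explicit Abel-summation rearrangement of $\sum_r \bar{w}_r(\ell_r - \ell_{r-1})T_r$; both routes are valid.
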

\begin{proof}
  Let $A \subseteq \calF$ be the (random) solution output by the algorithm.
  Let $\OPT$ be the optimum objective function.  For each $r=1,\dots,R$ let $\lr$ be the largest index such that $w_{\lr}=\bar{w}_r$. 
  From Lemma~\ref{lem:oblivious-clustering} we have for all $r=1,\dots,R$
  \begin{equation}\label{ineq:costlrA}
   \expected{\cost_{\lr}(A)} \leq 19 \cdot \ell_r T_r + 19 \cdot \sum_{j \in \calC} \cavlr(j).
  \end{equation}
  We decompose $\cost(A)$ into rectangular ``pieces''
  (defining $w_{R+1}=0$)
  \begin{align}\label{ineq:costA}
   &\expected{\cost(A)}\hspace{-2pt}=\hspace{-2pt}\expectedBigPar{\sum_{\ell=1}^{n} w_\ell \cdot c_\ell^\to(A)}\hspace{-2pt}=\hspace{-2pt}
   \expectedBigPar{\sum_{r=1}^{R} \sum_{s=1}^{\lr} (\bar{w}_r-\bar{w}_{r+1}) \cdot c_s^\to(A)}\hspace{-2pt}=\hspace{-2pt}
   \expectedBigPar{\sum_{r=1}^{R} (\bar{w}_r-\bar{w}_{r+1}) \cdot \cost_\lr(A)} \nonumber\\
   & = \sum_{r=1}^{R} (\bar{w}_r-\bar{w}_{r+1}) \cdot \expected{\cost_\lr(A)} \stackrel{\eqref{ineq:costlrA}}{\leq}
   19 \cdot \sum_{r=1}^{R} (\bar{w}_r-\bar{w}_{r+1}) \cdot \lr T_r +
   19 \cdot \sum_{r=1}^{R} \sum_{j \in \calC} (\bar{w}_r-\bar{w}_{r+1}) \cdot \cavlr(j).
  \end{align}
  We will bound this in terms of $\OPT$.
  We know that an optimal solution pays at least cost $T_r$ for weight in $w$
  equal to $\bar{w}_r$ for $r=1,\dots,R$. Therefore, defining $\ell_0 = 0$ and $\bar{w}_{R+1} = 0$ we have
  \begin{align}\label{ineq:opt-geq-rectangles}
   &\OPT \geq \sum_{r=1}^{R} \bar{w}_r \cdot (\ell_r - \ell_{r-1}) T_r =
   \sum_{r=1}^{R} \bar{w}_r \cdot \ell_r T_r -
   \sum_{r=2}^{R} \bar{w}_r \cdot \ell_{r-1} T_r \geq\nonumber\\
   &\sum_{r=1}^{R} \bar{w}_r \cdot \ell_r T_r -
   \sum_{r=2}^{R} \bar{w}_r \cdot \ell_{r-1} T_{r-1} =
   \sum_{r=1}^{R} \bar{w}_r \cdot \ell_r T_r -
   \sum_{r=1}^{R} \bar{w}_{r+1} \cdot \ell_{r} T_{r} =
   \sum_{r=1}^{R} (\bar{w}_r - \bar{w}_{r+1}) \cdot \ell_r T_r.
  \end{align}
  Moreover, we have
  \begin{align}\label{eq:opt-eq-fractions}
   &\sum_{r=1}^{R} \sum_{j \in \calC} (\bar{w}_r-\bar{w}_{r+1}) \cdot \cavlr(j) = 
   \sum_{r=1}^{R} \sum_{j \in \calC} \sum_{i \in \calF} (\bar{w}_r-\bar{w}_{r+1}) \cdot x_{ij} \cdot c_{ij}^{T_r} = \nonumber\\
   &\sum_{j \in \calC} \sum_{i \in \calF} x_{ij} \cdot \sum_{r=1}^{R} (\bar{w}_r-\bar{w}_{r+1}) \cdot c_{ij}^{T_r} =
   \sum_{j \in \calC} \sum_{i \in \calF} x_{ij} \cdot \sum_{r\colon \bar{w}_r \leq w(i,j)}^{R} (\bar{w}_r-\bar{w}_{r+1}) \cdot c_{ij} = \nonumber\\
   &\sum_{j \in \calC} \sum_{i \in \calF} x_{ij} \cdot w(i,j) \cdot c_{ij} =
   \sum_{j \in \calC} \sum_{i \in \calF} x_{ij} \cdot \cred_{ij} \stackrel{\eqref{lp:objective}}{\leq} \OPT.
  \end{align}
  Thus, we finally have
  $\expected{\cost(A)} \stackrel{\eqref{ineq:costA},\eqref{ineq:opt-geq-rectangles},\eqref{eq:opt-eq-fractions}}{\leq} 38 \cdot \OPT$.
\end{proof}
An immediate consequence of the lemma is a constant-factor approximation algorithm for \proborderedkmedian with a constant number of different weights.

Using standard bucketing arguments and neglecting sufficiently small
weights, we can ``round'' an arbitrary weight vector into a weight
vector with only a logarithmic number of different weights losing %only
a factor of $1+\epsilon$ in approximation. Plugging this into
Lemma~\ref{lem:oblivious-clustering}, we can obtain a
$(38+\epsilon)$-approximation algorithm for the general case in time
$(nm)^{\Oh(1/\epsilon\log(n))}$.

\paragraph*{Achieving Polynomial Time}

To obtain a truly-polynomial time algorithm we use the clever
bucketing approach proposed by Aouad et
al.~\cite{AouadS17:logn-apx}. In this approach the \emph{distances}
are grouped into logarithmically many distance classes thereby losing
a factor $1+\epsilon$.  For each distance class the \emph{average}
weight is guessed up to a factor of $1+\epsilon$.  The crucial point
is, that this guessing can be achieved in polynomial time because the
average weights are non-decreasing with increasing distance
class. This leads to a reduced cost function based on average
weights. The analysis of the resulting analysis decomposes the weight
vector into $n=|\calC|$ many rectangular objectives. While the proof
strategy is similar in spirit to the one of
Lemma~\ref{lem:general-reduction-to-one-rect} it turns out to be
technically more involved (see appendix for details).
\begin{theorem}
  For any $\epsilon>0$, there is an $(38+\epsilon)$-approximation
  algorithm for \proborderedkmedian with running time $(nm)^{\Oh(1/\epsilon\log(1/\epsilon))}$.
\end{theorem}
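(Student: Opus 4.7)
The plan is to combine Aouad and Segev's distance-bucketing~\cite{AouadS17:logn-apx} with the oblivious-clustering LP-rounding developed in Section~\ref{sec:oblivious-clustering}. I would first preprocess the weight vector: weights below roughly $\epsilon/n$ contribute at most an $\epsilon$-fraction of $\OPT$ in total and can be zeroed out, while the surviving weights can be rounded to the next power of $1+\epsilon$, leaving only $W=\Oh(\epsilon^{-1}\log(1/\epsilon))$ effectively distinct weight levels that will need to be guessed. I would also bucket all pairwise distances into geometric classes of ratio $1+\epsilon$.

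The key idea, following Aouad and Segev, is that one does not guess the weight at every rank individually but only, for each distance class $r$, the \emph{average} weight $\bar w_r$ that a fixed optimum solution applies (up to factor $1+\epsilon$) to the clients whose optimum connection cost falls into class $r$. Because distances are sorted non-increasingly in the objective while weights are non-increasing in rank, these averages satisfy $\bar w_1\le\bar w_2\le\cdots$ across distance classes indexed in increasing order of distance. The number of such monotone sequences can be bounded by $(nm)^{\Oh(\epsilon^{-1}\log(1/\epsilon))}$. For each guess I would set $\cred_{ij}=c_{ij}\cdot\bar w(c_{ij})$, where $\bar w(c_{ij})$ denotes the guessed average weight of the distance class containing $c_{ij}$, solve $\lpc{\cred}$, and run Algorithm~\ref{alg:main} with oblivious clustering on the resulting fractional solution.

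For the analysis I would mirror the proof of Lemma~\ref{lem:general-reduction-to-one-rect} but decompose the true weight vector $w$ into $n$ rectangular pieces, one for every rank $\ell\in\{1,\dots,n\}$. For each $\ell$, take $T_\ell$ to be the upper endpoint of the distance class in which the fixed optimum places its $\ell$-th largest connection cost; Lemma~\ref{lem:oblivious-clustering} then yields $\expected{\cost_\ell(A)}\le 19\,\ell\,T_\ell+19\sum_{j\in\calC}\cavt{T_\ell}(j)$. Telescoping these rectangular bounds with the coefficients $w_\ell-w_{\ell+1}$ splits $\expected{\cost(A)}$ into a combinatorial part $\sum_\ell(w_\ell-w_{\ell+1})\,\ell\, T_\ell$ and a fractional part $\sum_{j\in\calC}\sum_{i\in\calF}\cred_{ij}x_{ij}$. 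The main obstacle will be that, unlike in Lemma~\ref{lem:general-reduction-to-one-rect}, the guessed $\bar w_r$ is an \emph{average} over a whole distance class rather than the weight of any specific rank. Reconciling this requires bounding the combinatorial part against $\OPT$ by pairing each distance class with the set of ranks whose optimum connection cost lands in that class and using the $(1+\epsilon)$-correctness of the averaging guess to absorb the discrepancy between $\bar w_r$ and the individual $w_\ell$. The fractional part is bounded by $(1+\Oh(\epsilon))\,\OPT$ via the definition of $\cred$ combined with the same averaging argument. Once both pieces are controlled, the $(1+\Oh(\epsilon))$ slack is absorbed into the approximation ratio by standard rescaling of $\epsilon$, yielding $(38+\epsilon)\,\OPT$ and completing the proof.
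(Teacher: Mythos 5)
Your proposal follows essentially the same route as the paper: distance bucketing into $\Oh(\frac{1}{\epsilon}\log\frac{n}{\epsilon})$ geometric classes, guessing the per-class average weights (using their monotonicity across classes to keep the number of guesses polynomial), forming the reduced cost $\cred_{ij}=c_{ij}\cdot\bar w(c_{ij})$, solving $\lpc{\cred}$, rounding with oblivious clustering, and then decomposing the objective into $n$ rectangular pieces with $T_\ell$ set to the upper endpoint of the distance class of the $\ell$-th largest optimal connection cost; the combinatorial and fractional parts are each bounded against $\OPT$ up to $(1+\Oh(\epsilon))$ factors, exactly as in the paper's appendix. One small inaccuracy: your preliminary rounding of the weight vector to $\Oh(\epsilon^{-1}\log(1/\epsilon))$ distinct levels is both incorrectly counted (rounding weights in $[\epsilon w_1/n,\,w_1]$ to powers of $1+\epsilon$ gives $\Oh(\frac{1}{\epsilon}\log\frac{n}{\epsilon})$ levels, which depends on $n$) and unnecessary here, since the polynomial guess count comes from the monotonicity of the per-class averages rather than from a reduced weight alphabet; the paper skips this preprocessing entirely in the polynomial-time variant.
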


%===============================================================================
%===============================================================================
%===============================================================================
\section{Concluding Remarks and Open Questions}\label{sec:conclusions}
We have obtained a constant-factor approximation for \proborderedkmedian. We have extended the less detailed version of
the analysis of the algorithm by Charikar and
Li~\cite{CharikarL12:k-median} for \probkmedian, and hence our
constants can probably be improved. It would be interesting to see, if
our ideas can be used for other problems with ordered objectives.

% To see the expressive power of the ordered objective function observe that the \probkmedianwithoutliers problem~\cite{CharikarKMN01}, which is a version of \probkmedian 
% where the objective function does not pay for the $p$ most expensive
% connection's cost for some fixed $p$, can naturally be encoded as a version of
% \proborderedkmedian with \emph{increasing} weights.
% %with sequence of weights equal to
% %$n-p$ of ones on the highest indices and $p$ zeros on the lowest indices.
% For \probkmedianwithoutliers a constant factor approximation
% using local search and Lagrangian preserving multiplier property is known~\cite{Chen08}.

%

%===============================================================================
%===============================================================================
%===============================================================================

\paragraph*{Acknowledgments.}
%We thank \todo{sb} for their helpful and insightful comments.
J. Byrka and J. Spoerhase were supported by the NCN grant number 2015/18/E/ST6/00456.
K. Sornat was supported by the NCN grant number
2015/17/N/ST6/03684.

%===============================================================================
%===============================================================================
%===============================================================================
\bibliographystyle{plain}
\bibliography{main}

\begin{thebibliography}{10}

\bibitem{AhmadianNSW16}
Sara Ahmadian, Ashkan Norouzi{-}Fard, Ola Svensson, and Justin Ward.
\newblock Better guarantees for $k$-means and {E}uclidean $k$-median by
  primal-dual algorithms.
\newblock In {\em Proc. 58th IEEE Symposium on Foundations of Computer Science
  (FOCS'17)}, 2017.

\bibitem{AlamdariS:waoa17}
Soroush Alamdari and David Shmoys.
\newblock A bicriteria approximation algorithm for the $k$-center and
  $k$-median problems.
\newblock In {\em Proceedings of the 15th International Workshop Approximation
  and Online Algorithms ({WAOA'17})}, 2017.

\bibitem{AouadS17:logn-apx}
Ali Aouad and Danny Segev.
\newblock {The ordered $k$-median problem: Surrogate models and approximation
  algorithms}.
\newblock Under review, available at
  \url{http://www.mit.edu/~aaouad/MOR-ordered-median.pdf} [2018-02-20].

\bibitem{arya2004local}
Vijay Arya, Naveen Garg, Rohit Khandekar, Adam Meyerson, Kamesh Munagala, and
  Vinayaka Pandit.
\newblock Local search heuristics for $k$-median and facility location
  problems.
\newblock {\em SIAM Journal on computing}, 33(3):544--562, 2004.

\bibitem{bartal98:tree-embeddings}
Yair Bartal.
\newblock On approximating arbitrary metrices by tree metrics.
\newblock In {\em Proc. Thirtieth Annual {ACM} Symposium on the Theory of
  Computing (STOC'98)}, pages 161--168, 1998.

\bibitem{bertsimas2014-least-quantile-regression}
Dimitris Bertsimas and Rahul Mazumder.
\newblock Least quantile regression via modern optimization.
\newblock {\em Ann. Statist.}, 42(6):2494--2525, 12 2014.

\bibitem{BertsimasS03-robust-k-sum}
Dimitris Bertsimas and Melvyn Sim.
\newblock Robust discrete optimization and network flows.
\newblock {\em Math. Program.}, 98(1-3):49--71, 2003.

\bibitem{bertsimas05:opt-over-integers}
Dimitris Bertsimas and Robert Weismantel.
\newblock {\em Optimization over integers}.
\newblock Athena Scientific, 2005.

\bibitem{BolandDNP06:exact-ordered-median}
Natashia Boland, Patricia Dom{\'{\i}}nguez{-}Mar{\'{\i}}n, Stefan Nickel, and
  Justo Puerto.
\newblock Exact procedures for solving the discrete ordered median problem.
\newblock {\em Computers {\&} {OR}}, 33(11):3270--3300, 2006.

\bibitem{BradleyFM99-math-prog-form-chall}
Paul~S. Bradley, Usama~M. Fayyad, and Olvi~L. Mangasarian.
\newblock Mathematical programming for data mining: Formulations and
  challenges.
\newblock {\em {INFORMS} Journal on Computing}, 11(3):217--238, 1999.

\bibitem{ByrkaPRST17}
Jaros{\l}aw Byrka, Thomas Pensyl, Bartosz Rybicki, Aravind Srinivasan, and Khoa
  Trinh.
\newblock An improved approximation for \emph{k}-median and positive
  correlation in budgeted optimization.
\newblock {\em {ACM} Trans. Algorithms}, 13(2):23:1--23:31, 2017.

\bibitem{ChakrabartyS17}
Deeparnab Chakrabarty and Chaitanya Swamy.
\newblock Interpolating between $k$-median and $k$-center: Approximation
  algorithms for ordered $k$-median.
\newblock {\em CoRR}, abs/1711.08715, 2017.

\bibitem{CharikarGTS99-kmedian}
Moses Charikar, Sudipto Guha, {\'{E}}va Tardos, and David~B. Shmoys.
\newblock A constant-factor approximation algorithm for the $k$-median problem.
\newblock In {\em Proc. 31st Annual {ACM} Symposium on Theory of Computing
  (STOC'99)}, pages 1--10, 1999.

\bibitem{CharikarL12:k-median}
Moses Charikar and Shi Li.
\newblock A dependent {LP}-rounding approach for the $k$-median problem.
\newblock In {\em Proc. 39th International Colloquium on Automata, Languages,
  and Programming (ICALP'12)}, pages 194--205, 2012.

\bibitem{Dominguez-MarinNHM05-heuristic}
Patricia Dom{\'{\i}}nguez{-}Mar{\'{\i}}n, Stefan Nickel, Pierre Hansen, and
  Nenad Mladenovic.
\newblock Heuristic procedures for solving the discrete ordered median problem.
\newblock {\em Annals {OR}}, 136(1):145--173, 2005.

\bibitem{DreznerN09-dc-decomposition-median}
Zvi Drezner and Stefan Nickel.
\newblock Constructing a {DC} decomposition for ordered median problems.
\newblock {\em J. Global Optimization}, 45(2):187--201, 2009.

\bibitem{DreznerN09-ordered-one-median-plane}
Zvi Drezner and Stefan Nickel.
\newblock Solving the ordered one-median problem in the plane.
\newblock {\em European Journal of Operational Research}, 195(1):46--61, 2009.

\bibitem{EspejoRV09-convex-ordered-median}
Inmaculada Espejo, Antonio~M. Rodr{\'{\i}}guez{-}Ch{\'{\i}}a, and C.~Valero.
\newblock Convex ordered median problem with $\ell_{p}$-norms.
\newblock {\em Computers {\&} {OR}}, 36(7):2250--2262, 2009.

\bibitem{FernandezPPS17-OWA-multi-obj-ST}
Elena Fern{\'{a}}ndez, Miguel~A. Pozo, Justo Puerto, and Andrea Scozzari.
\newblock Ordered weighted average optimization in multiobjective spanning tree
  problem.
\newblock {\em European Journal of Operational Research}, 260(3):886--903,
  2017.

\bibitem{harris2017symmetric}
David~G. Harris, Thomas Pensyl, Aravind Srinivasan, and Khoa Trinh.
\newblock Symmetric randomized dependent rounding.
\newblock {\em CoRR}, abs/1709.06995, 2017.

\bibitem{HochbaumS85}
Dorit~S. Hochbaum and David~B. Shmoys.
\newblock A best possible heuristic for the \emph{k}-center problem.
\newblock {\em Mathematics of Operations Research}, 10(2):180--184, 1985.

\bibitem{jain2003greedy}
Kamal Jain, Mohammad Mahdian, Evangelos Markakis, Amin Saberi, and Vijay~V
  Vazirani.
\newblock Greedy facility location algorithms analyzed using dual fitting with
  factor-revealing {LP}.
\newblock {\em Journal of the ACM (JACM)}, 50(6):795--824, 2003.

\bibitem{KalcsicsNP03-multifacility-ordered}
J{\"{o}}rg Kalcsics, Stefan Nickel, and Justo Puerto.
\newblock Multifacility ordered median problems on networks: {A} further
  analysis.
\newblock {\em Networks}, 41(1):1--12, 2003.

\bibitem{KalcsicsNPT02-algorithic-ordered}
J{\"{o}}rg Kalcsics, Stefan Nickel, Justo Puerto, and Arie Tamir.
\newblock Algorithmic results for ordered median problems.
\newblock {\em Oper. Res. Lett.}, 30(3):149--158, 2002.

\bibitem{KanungoMNPSW04}
Tapas Kanungo, David~M. Mount, Nathan~S. Netanyahu, Christine~D. Piatko, Ruth
  Silverman, and Angela~Y. Wu.
\newblock A local search approximation algorithm for $k$-means clustering.
\newblock {\em Computational Geometry}, 28(2-3):89--112, 2004.

\bibitem{labbe17-comparative-study-ordered-median}
Martine Labb{\'{e}}, Diego Ponce, and Justo Puerto.
\newblock A comparative study of formulations and solution methods for the
  discrete ordered $p$-median problem.
\newblock {\em Computers {\&} {OR}}, 78:230--242, 2017.

\bibitem{laporte-etal:location-science}
Gilbert Laporte, Stefan Nickel, and Francisco Saldanha~da Gama, editors.
\newblock {\em Location Science}.
\newblock Springer, 2015.

\bibitem{li2016approximating}
Shi Li and Ola Svensson.
\newblock Approximating $k$-median via pseudo-approximation.
\newblock {\em SIAM Journal on Computing}, 45(2):530--547, 2016.

\bibitem{Nickel:ordered-weber}
S.~Nickel.
\newblock Discrete ordered {W}eber problems.
\newblock In {\em Proc. Operations Research (OR'01)}, pages 71--76. Springer,
  2001.

\bibitem{NickelP99-location-problems-unified}
Stefan Nickel and Justo Puerto.
\newblock A unified approach to network location problems.
\newblock {\em Networks}, 34(4):283--290, 1999.

\bibitem{Nickel09-ordered-k-median-book}
Stefan Nickel and Justo Puerto.
\newblock {\em Location Theory - {A} Unified Approach}.
\newblock Springer, 2009.

\bibitem{puerto-fernandez:geometrical-props-single-facility}
J.~Puerto and F.~R. Fern{\'{a}}ndez.
\newblock Geometrical properties of the symmetrical single facility location
  problem.
\newblock {\em Journal of Nonlinear and Convex Analysis}, 1(3):321--342, 2000.

\bibitem{PuertoPG14-modified-vns}
Justo Puerto, Dionisio P{\'{e}}rez{-}Brito, and Carlos~G.
  Garc{\'{\i}}a{-}Gonz{\'{a}}lez.
\newblock A modified variable neighborhood search for the discrete ordered
  median problem.
\newblock {\em European Journal of Operational Research}, 234(1):61--76, 2014.

\bibitem{PuertoRR13-branch-cut-ordered-median}
Justo Puerto, A.~B. Ramos, and Antonio~M. Rodr{\'{\i}}guez{-}Ch{\'{\i}}a.
\newblock A specialized branch {\&} bound {\&} cut for single-allocation
  ordered median hub location problems.
\newblock {\em Discrete Applied Mathematics}, 161(16-17):2624--2646, 2013.

\bibitem{PuertoR05-exp-card-ordered-median}
Justo Puerto and Antonio~M. Rodr{\'{\i}}guez{-}Ch{\'{\i}}a.
\newblock On the exponential cardinality of {FDS} for the ordered $p$-median
  problem.
\newblock {\em Oper. Res. Lett.}, 33(6):641--651, 2005.

\bibitem{puerto2009minimax}
Justo Puerto, Antonio~M Rodr{\'\i}guez-Ch{\'\i}a, and Arie Tamir.
\newblock Minimax regret single-facility ordered median location problems on
  networks.
\newblock {\em INFORMS Journal on Computing}, 21(1):77--87, 2009.

\bibitem{PuertoRT17-k-sum}
Justo Puerto, Antonio~M. Rodr{\'{\i}}guez{-}Ch{\'{\i}}a, and Arie Tamir.
\newblock Revisiting $k$-sum optimization.
\newblock {\em Math. Program.}, 165(2):579--604, 2017.

\bibitem{PuertoT05:treeshaped-ordered}
Justo Puerto and Arie Tamir.
\newblock Locating tree-shaped facilities using the ordered median objective.
\newblock {\em Math. Program.}, 102(2):313--338, 2005.

\bibitem{Rodríguez-Chía2005:ordered-networks}
A.~M. Rodr{\'i}guez-Ch{\'i}a, J.~Puerto, D.~P{\'e}rez-Brito, and J.~A. Moreno.
\newblock The $p$-facility ordered median problem on networks.
\newblock {\em Top}, 13(1):105--126, Jun 2005.

\bibitem{StanimirovicKD07-genetic-ordered}
Zorica Stanimirovic, Jozef Kratica, and Djordje Dugosija.
\newblock Genetic algorithms for solving the discrete ordered median problem.
\newblock {\em European Journal of Operational Research}, 182(3):983--1001,
  2007.

\bibitem{Tamir01}
Arie Tamir.
\newblock {The $k$-centrum multi-facility location problem}.
\newblock {\em Discrete Applied Mathematics}, 109(3):293--307, 2001.

\bibitem{tamir98:centdiantree}
Arie Tamir, Dionisio P{\'{e}}rez{-}Brito, and Jos{\'{e}}~A.
  Moreno{-}P{\'{e}}rez.
\newblock A polynomial algorithm for the $p$-centdian problem on a tree.
\newblock {\em Networks}, 32(4):255--262, 1998.

\end{thebibliography}
%===============================================================================
%===============================================================================
%===============================================================================

\appendix

\section{Missing Proofs for Section~\ref{sec:algor-fram}}
Below, we describe some basic normalization steps for a feasible solution $(x,y)$ to \lpc{$\cred$}.
\begin{definition}
  Let $(x,y)$ be a feasible solution to \lpc{$\cred$} where $\cred$ is
  some reduced cost function.  We call the assignment $x$ of clients
  to facilities \emph{distance-optimal} if $x$ minimizes
  $\sum_{i\in\calF,j\in\calC}c_{ij}x_{ij}$ when $y$ is kept
  fixed.
\end{definition}

\begin{lemma}\label{lem:canonical-solution}
  We can w.l.o.g.\ assume that an optimal solution $(x,y)$ to
  \lpc{$\cred$} for some reduced cost function $\cred$ satisfies the
  following properties.
  \begin{enumerate}[(i)]
  \item\label{item:non-negative-facility} For any facility $i\in\calF$ we have $y_i>0$,
  \item\label{item:client-fully-assigned} for any $i\in\calF,j\in\calC$ we have $x_{ij}\in\{0,y_i\}$,
  \item\label{item:distance-optimality} the assignment $x$ is distance-optimal.
  \end{enumerate}
\end{lemma}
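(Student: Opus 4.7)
The plan is to start from an arbitrary optimum solution $(x^*, y^*)$ of \lpc{$\cred$} and perform three successive transformations, each preserving LP-optimality and installing one of the three properties without destroying the earlier ones.

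First, to install (i), I would simply discard from $\calF$ every facility $i$ with $y^*_i = 0$. Since constraint \eqref{lp:fractional_opening} forces $x^*_{ij} = 0$ at any such facility, the discarding is inert: feasibility, LP-value, and all remaining variables are preserved. I then implicitly redefine $\calF$ as the set of strictly open facilities. Next, to install (iii), I keep $y := y^*$ fixed and replace $x^*$ by a distance-optimal assignment $x'$ computed per client by a greedy rule: for each $j\in\calC$, order the facilities by $c_{ij}$ (breaking ties arbitrarily) and pour mass into the nearest facility up to its capacity $y_i$, then into the next, and so on, until $\sum_i x'_{ij} = 1$. By construction $x'$ minimizes $\sum_{ij} c_{ij} x_{ij}$ with $y$ fixed. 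The key claim is that $(x',y^*)$ remains optimal for \lpc{$\cred$}: the hypothesis $c_{ij}\leq c_{i'j'}\Rightarrow \cred_{ij}\leq\cred_{i'j'}$ makes the above $c$-ordering of $\calF$ also a valid $\cred$-ordering (in particular, $c$-ties are $\cred$-ties), and a greedy fill by the $c$-ordering is therefore simultaneously a greedy fill by the $\cred$-ordering. Hence both costs are minimized simultaneously.

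Finally, to install (ii), I would apply the standard facility-splitting transformation to $(x', y^*)$. For each remaining facility $i$, list the distinct values in $\{x'_{ij} : j\in\calC\} \cup \{y^*_i\}$ as $0 = v_0 < v_1 < \ldots < v_K = y^*_i$, replace $i$ by $K$ copies $i^{(1)}, \dots, i^{(K)}$ at the same spatial location (inheriting all $c$- and $\cred$-distances of $i$), and set $y_{i^{(l)}} := v_l - v_{l-1} > 0$ together with $x_{i^{(l)}j} := v_l - v_{l-1}$ if $x'_{ij} \geq v_l$ and $0$ otherwise. A telescoping check gives $\sum_l x_{i^{(l)}j} = x'_{ij}$ and $\sum_l y_{i^{(l)}} = y^*_i$, so the constraints \eqref{lp:fractional_opening}--\eqref{lp:non_negative} remain satisfied, the reduced LP-cost $\sum_{ij}\cred_{ij}x_{ij}$ and the true metric cost $\sum_{ij} c_{ij} x_{ij}$ are both preserved, and $x_{i^{(l)}j}\in\{0,y_{i^{(l)}}\}$ with $y_{i^{(l)}}>0$ hold by construction. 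Therefore properties (i), (ii), (iii) all hold in the final solution.

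The step I expect to require the most care is the second one: verifying that the switch to a $c$-optimal assignment cannot destroy $\cred$-optimality. This is the only place the order-preservation hypothesis on $\cred$ is genuinely used, and I would make it precise by the standard monotone-rearrangement/greedy argument sketched above, namely that any total ordering of facilities refining the $c$-order is automatically compatible with the $\cred$-order, so a greedy capacity-filling with respect to one is optimal with respect to the other. The first and third steps are then essentially bookkeeping that commute safely with the second.
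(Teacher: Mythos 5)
Your proof is correct and follows essentially the same route as the paper: the greedy per-client fill in non-decreasing $c$-order, justified by the order-preservation property of $\cred$, gives distance-optimality, while (i) and (ii) are obtained by removing zero-opened facilities and the standard facility-splitting construction, which the paper simply cites as folklore from Charikar and Li. Your write-up merely fills in the splitting details (and the easy check that splitting co-located copies preserves both costs and distance-optimality), which the paper leaves implicit.
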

\begin{proof}[Proof of Lemma~\ref{lem:canonical-solution}]
  To see the third property fix the opening vector $y$ and some client
  $j$.  Now sort all facilities $i$ in non-decreasing order of their
  distance $c_{ij}$ to $j$ and greedily assign as much of the
  remaining demand of $j$ to the current facility $i$ (respecting the
  constraint $x_{ij}\leq y_i$). Stop when the full demand of $j$ is
  served and repeat this process for all clients.  Since the reduced
  cost function $\cred$ respects the order of the original distances
  (see definition) the resulting assignment is optimal also under the
  reduced cost function.

  The first and second properties are folklore and can be achieved by
  removing or duplicating facilities (see
  \cite{CharikarL12:k-median}).
\end{proof}

 \begin{proof}[Proof of Lemma~\ref{lem:lp-round-depend}]
   Charikar and Li \cite{CharikarL12:k-median} describe how to
   implement the procedure satisfying the above claims.  In doing so,
   the only requirement is that $(x,y)$ be a (not necessarily optimal)
   feasible solution to \lpc{$c$}, that the volume of each bundle
   $\calU_j$ has volume at least $1/2$ and that the union of set
   families $\{\{y_i\}\}_{i\in\cal F}$, $\{U_j\}_{j\in\calC'}$,
   $\{\calU_j\cup\calU_{j'}\mid (j,j')\in \calM\}$, $\{\calF\}$ form a
   laminar family. The latter claim on laminarity follows immediately
   from the construction in the algorithm.  The validity of
   $\vol(\calU_j)\geq 1/2$ for $j\in\calC'$ depends on the
   implementation of the clustering procedure and has thus to be
   proven for the specific implementation.
 \end{proof}

 \section{Missing Proofs for Section~\ref{sec:one_rectangle_apx}}
 \begin{proof}[Proof of Lemma~\ref{lem:filtering}]
  To see~(\ref{item:filtering-spread}) assume w.l.o.g.\ that $j$ is
  considered before $j'$ as a potential cluster center in the
  algorithm.  Thus $\cavt{T}(j)\leq\cavt{T}(j')$.  If
  $c_{jj'}\leq 4\cavt{T}(j')+4T=4\max(\cavt{T}(j),\cavt{T}(j'))+4T$ then $j'$
  would be deleted from $\calC''$ when $j$ is considered. A
  contradiction to the fact that $j'$ is a cluster center.

  In order to see~(\ref{item:filtering-assign}), consider an arbitrary
  client $j\in\calC\setminus\calC'$.  As $j$ is not a cluster center
  it was deleted from $\calC''$ when some cluster center $j'\in\calC'$
  was considered.  For this cluster center we have
  $c_{jj'}\leq 4\cavt{T}(j)+4T$.
\end{proof}

\begin{proof}[Proof of Lemma~\ref{lem:bundles}]
  To prove statement~(\ref{item:bundles-volume}) consider an arbitrary
  $j\in\cal C'$. Let $j'\in\calC'$ be such that $2R_j=c_{jj'}$.  We
  have $c_{jj'}>4\cavt{T}(j)+4T\geq 4\cav(j)$ and hence, $R_j>2\cav(j)$.
  Therefore,
  $\cav(j) = \sum_{i\in\calF_j} x_{ij}c_{ij} \geq
  \sum_{i\in\calF_j\setminus\calU_j}x_{ij}c_{ij} \geq R_j \cdot
  \sum_{i\in\calF_j\setminus\calU_j}x_{ij} \geq R_j \cdot
  \vol(\calF_j\setminus\calU_j)$ where the last inequality follows
  because $x_{ij}=y_i$ for all $i\in\calF$ and $j\in\calC'$. Therefore
  $\vol(\calF_j\setminus\calU_j) < 1/2$ and $\vol(\calU_j) > 1/2$.

  To prove~(\ref{item:bundles-disjoint}) consider distinct
  $j,j'\in\calC'$. By the definition of $R_j$ we have
  $c_{jj'}\geq 2R_j$. Hence, for any facility $i$ in $\calB(j,R_j)$ we
  have $c_{ij} < c_{ij'}$, which implies~(\ref{item:bundles-disjoint}).
\end{proof}

\section{Missing Proofs for Section~\ref{sec:oblivious-clustering}}

\begin{proof}[Proof of Lemma~\ref{lem:oblivious-clustering}]
  The idea of the proof is to provide for each client an upper bound
  on the distance $C_j$ (according to the original distance $c$) traveled by
  this client. In what follows $c_1, c_2$ are constants to be determined later.

  The upper bound is paid for by two budgets $D_j$ and $X_j$. The
  ``deterministic'' budget $D_j$ is $c_1T$. The ``probabilistic''
  budget $X_j$ is a random variable (depending on the random choices
  made by the algorithm).

  We will show below that (by a suitable choice of $X_j$) the
  connection cost $C_j$ of $j$ can actually be upper bounded by
  $D_j+X_j$ and $\expected{X_j} \leq c_2\cavt{T}(j)$. If this can be
  shown then this will complete our proof of a constant-factor
  approximation.  To this end note that at most $\ell$ clients $j$
  will pay the budget $D_j=c_1T$ because at most $\ell$ distances are
  actually accounted for in the objective function. Analogously to the case of dedicated clustering, we obtain:
\begin{displaymath}
  \expected{\ALG} \leq D_j \cdot \ell + \sum_{j \in C} \expected{X_j} \leq
  c_1\ell\cdot T+c_2 \cdot \sum_{j\in C}\cavt{T}(j).
\end{displaymath}

To show the claim consider an arbitrary client $j$ with connection
cost $C_j$. We incrementally construct our upper bound on $C_j$
starting with $0$.  Each increment will be either charged to $D_j$ or
$X_j$.

Consider a client $j$ and the cluster center $j'$ it is assigned to
(possibly $j=j'$).  We have that $c_{jj'}\leq 4\cav(j)\leq 4\cavt{T}(j)+4T$ by
line~\ref{alg:line:clustering} of Algorithm~\ref{alg:oblivious-clustering}. We charge
$4T$ to $D_j$ and $4\cavt{T}(j)$ with probability~1 to $X_j$.

We now describe how to pay for the transport from $j'$ to an open
facility. There are two cases to distinguish. Either a facility
within a radius of $\beta T$ is opened or not. (Here, $\beta\geq 2$ is a parameter to be determined later.) If yes, then this cost can be
covered by charging an additional amount of $\beta T$ to $D_j$.
In this case the total cost is upper bounded by $D_j = (\beta+4) T$ and
$\expected{X_j} = 4\cavt{T}(j)$.

If \emph{no} facility within a radius of $\beta T$ of $j'$ is opened then observe that for
all facilities $i$ with $c_{ij'}\geq \beta T$ we have that
$c^T_{ij'}=c_{ij'}$ because of $\beta \geq 1$.  We now continue to bound the connection cost for
this case.  Let $j''$ be the closest client distinct from $j'$ in
$\calC'$. We consider the case where $j''$ and $j'$ are not
matched. (The case where they are matched is simpler.) Let $j'''$ be
the client in $\calC'$ to which $j''$ is matched i.e. $(j'',j''') \in \calM$.  By the dependent
rounding process one facility in $\calU_{j''}\cup\calU_{j'''}$ will be
opened. We have that $c_{j'j''}=2R_{j'}=2R$ and thus $c_{j''j'''}\leq 2R$ and $R_{j''},R_{j'''}\leq R$
(otherwise, $j''$ and $j'''$ would not have been matched).
This means that in case no facility is opened in the bundle $\calU_{j'}$
the client $j$ travels an additional distance (in expectation) of at most
$\max(c_{j'j''} + R_{j''}, c_{j'j''} + c_{j''j'''} + R_{j'''}) \leq 2R + 2R + R\leq 5R$.

If a facility is opened in the bundle $\calU_{j'}$ then we charge this
additional connection cost to $X_j$. The contribution of the additional connection cost in this case to the expectation of $X_j$ cost is at most
\begin{equation}\label{eq:cfar}
  \sum_{i \in \calU_{j'} \setminus \calB(j',\beta T)} x_{ij'}c_{ij'} =
  \sum_{i \in \calU_{j'} \setminus \calB(j',\beta T)} x_{ij'}c^T_{ij'} \leq
  \sum_{i \in \calF_{j'} \setminus \calB(j',\beta T)} x_{ij'}c^T_{ij'}\,.
\end{equation}
Here, equality holds because we assume that no facility is opened in $\calB(j',\beta T)$ where $\beta \geq 1$ and because therefore $c_{ij'}=c^T_{ij'}$ for all $i\in\calU_{j'}\setminus B(j',\beta T)$.  The right hand side of~(\ref{eq:cfar}) is denoted by  $\cfarr(j')$  and is clearly upper bounded by $\sum_{i \in \calF_{j'}} x_{ij}c^T_{ij}=\cavt{T}(j')$.

We finally handle the case where no facility in $\calU_{j'}$ is opened
and where $j$ additionally travels a distance of at most $5R$. If $R\leq \beta T$, we can charge the additional travel distance of at most $5\beta T$ to $D_j$. Hence, we focus on the difficult case where $R>\beta T$ and where the maximum distance traveled can be unbounded in terms of $T$.  We
charge this additional cost to $X_j$. We bound the probability that
this case occurs.  We claim that $\vol(\calU_{j'})$ is at least
$1-\cfarr(j')/R$.  To see this, note that for all facilities in
$\calF_{j'}\setminus\calU_{j'}$ we have that $c^T_{ij'}=c_{ij}$ because $R>\beta T$ and $\beta \geq 1$.
Hence
\begin{align*}
 \cfarr(j') & = \sum_{i \in \calF_{j'} \setminus \calB(j',\beta T)} \hspace{-15pt} x_{ij'} \cdot c^T_{ij'} =
 \sum_{i \in \calF_{j'} \setminus \calB(j',\beta T)} \hspace{-15pt} x_{ij'} \cdot c_{ij'} \geq
 \sum_{i \in \calF_{j'} \setminus \calU_{j'}} \hspace{-10pt} x_{ij'} \cdot c_{ij'} \\
 &\geq R \cdot \hspace{-10pt} \sum_{i \in \calF_{j'} \setminus \calU_{j'}} x_{ij'} =
 R \cdot \vol(\calF_{j'} \setminus \calU_{j'}) =
 R \cdot (1-\vol(\calU_{j'})),
\end{align*}
which implies the claim. Here, note that $\calB(j',\beta T)  \subseteq \calU_{j'}$ because $R>\beta T$.
This means that $j$ travels the additional distance of at most $5 R$ with probability
at most $\cfarr(j')/R$ and hence the increment to $X_j$ in expectation is
upper bounded by $5 \cdot \cfarr(j')$.
Thus, for the case where no facility is opened within a radius of $\beta T$ around $j'$, we can upper bound
$\expected{X_j}$ by:
\begin{itemize}
 \item an expected cost of serving client $j$ through
 the closest cluster center $j'$ that is $4 \cdot \cavt{T}(j)$ (random part), plus
 \item a value $\cfarr(j')$ (with probability at most one), plus
 \item a value $5 \cdot R$ with probability at most $\cfarr(j')/R$.
\end{itemize}
Hence $\expected{X_j} \leq 4 \cdot \cavt{T}(j) \cdot 1 + \cfarr(j') \cdot
1 + 5 \cdot R \cdot \cfarr(j')/R=4\cavt{T}(j)+5\cfarr(j')$.
As in the oblivious clustering we sort the clients according to $\cav$ rather than $\cavt{T}$ we do not necessarily have that $\cavt{T}(j')$ or even $\cfarr(j')$ are upper bounded by $\cavt{T}(j)$.  We still can relate the latter two quantities in the following way.

First, assume that $c_{jj'}>\alpha T$ where $1\leq \alpha<\beta-1$ is a parameter to be determined later. We have that $\cav(j')\leq \cav(j)$ by our (oblivious) clustering.  Hence $\cavt{T}(j')\leq\cav(j')\leq\cav(j)\leq\cavt{T}(j)+T$. On the other hand, $\alpha T< c_{jj'}\leq 4\cav(j)$ since $j$ was assigned to $j'$.  Hence $T<4/\alpha\cdot \cav(j)$ and thus $\cavt{T}(j')\leq (1+4/\alpha)\cavt{T}(j)$. Since $\cfarr(j')\leq \cavt{T}(j')$ we can upper bound $\expected{X_j}$ in this case by $(9+20/\alpha)\cavt{T}(j)$.

Second, assume that $c_{jj'}\leq \alpha T$. Recall that we assume further that no facility is opened within $\calB(j',\beta T)$. We claim that in the assignment vector $x$ the total demand assigned from $j'$ to $\calF\setminus \calB(j',\beta T)$ is \emph{at most} the total demand assigned from $j$ to $\calF\setminus \calB(j',\beta T)$.  This is, because any facility within the ball $\calB(j',\beta T)$ is (trivially) strictly closer than any facility not in this ball.  Hence, if $j$ would manage to assign strictly more demand to facilities inside the ball than $j'$ does, then we could construct a new assignment for $j'$ that also serves strictly more demand of $j'$ within this ball contradicting the optimality of $x$.  Now, we are going to construct a (potentially suboptimal) assignment of the part of the demand of $j'$ contributing to $\cfarr(j')$ that can be upper bounded in terms of $\cavt{T}(j)$. As the optimum assignment will clearly will have the same upper bound this will conclude our proof.  To this end, we simply assign the demand of $j'$ outside of the ball $\calB(j',\beta T)$ in the same way as does $j$.  Note that by our above claim this provides enough demand as $j$ ships at least as much demand outside the ball as $j'$ does.  In particular let $i$ be an arbitrary facility outside the ball.  We now set $x_{ij'}':=x_{ij}$ to obtain our new assignment for $j'$
%\kso{maybe let's write something about rest of the demand because now the constraint that sum of demands is equal 1 can be not satisfied; or maybe just change the LP to have inequality instead of equality in $\sum_i x_{ij} = 1$}.
Note that by triangle inequality $c_{ij}\geq c_{ij'}-c_{jj'}\geq (\beta-\alpha)T\geq T$ and thus $c_{ij}=c^T_{ij}$
(a constraint $\alpha \leq \beta - 1$ was introduced to obtain $c_{ij} = c^T_{ij}$ in this case).
%\kso{why do we need $\alpha \geq 1$?, is $\alpha \geq 0$ sufficient?; minimum of the ratio is for beta=3, alfa=2 because we need beta-alfa >= 1.})
Therefore
\begin{displaymath}
  \frac{c^T_{ij'}}{c^T_{ij}}=\frac{c_{ij'}}{c_{ij}}\leq \frac{c_{ij'}}{c_{ij'}-c_{jj'}}\leq\frac{\beta T}{(\beta-\alpha)T}=\frac{\beta}{\beta-\alpha}\,.
\end{displaymath}
$x'$ can be not optimal assignement for $j'$, hence
\begin{displaymath}
  \cfarr(j') \leq \sum_{i \in \calF \setminus \calB(j',\beta T)} x_{ij'}'c^T_{ij'} \leq \frac{\beta}{\beta-\alpha}\sum_{i \in \calF \setminus \calB(j',\beta T)} x_{ij}c^T_{ij}\leq \frac{\beta}{\beta-\alpha}\cavt{T}(j)\,.
\end{displaymath}
In the end we have two upper bounds for both budgets:
\begin{align*}
  D_j &\leq (4+5\beta) T,\\
  \expected{X_j} &\leq  \max\left\{4+\frac{5\beta}{\beta-\alpha},9+\frac{20}{\alpha}\right\} \cavt{T}(j).
\end{align*}
Plugging $\alpha=2$ and $\beta=3$ gives the desired constants in the claim.
\end{proof}

\section{Missing Proofs from Section~\ref{sec:general-weights}}\label{sec:proofs-general-weights}

\begin{corollary}
 Let $\calI$ will be an instance of \proborderedkmedian
 with constant number of different weights in $w$.
 There exists a randomized algorithm that solves \proborderedkmedian on $\calI$
 and gives $38$-approximation (in expectation) in polynomial time.
\end{corollary}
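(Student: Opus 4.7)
The plan is to derive this corollary as an essentially immediate consequence of Lemma~\ref{lem:general-reduction-to-one-rect}, which already establishes a $38$-approximation algorithm that performs $\Oh((nm)^R)$ many invocations of Algorithm~\ref{alg:main} with oblivious clustering, where $R$ is the number of distinct weights in $w$. The corollary merely specializes this to the case where $R$ is a constant, so the bulk of the work is already done; what remains is a running-time argument.

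First, I would observe that under the hypothesis that $R$ is a constant (independent of the input size), the quantity $(nm)^R$ is polynomial in $n$ and $m$. Hence the total number of guesses of the threshold vector $(T_1,\dots,T_R)$ and thus the number of calls to Algorithm~\ref{alg:main} with oblivious clustering is polynomial.

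Next, I would verify that each individual call takes polynomial time. For a fixed guess of thresholds the reduced cost function $\cred_{ij} = c_{ij} \cdot w(i,j)$ can be computed in time $\Oh(nm)$, and the LP relaxation \lpc{$\cred$} is of polynomial size and thus solvable in polynomial time. The subsequent normalization to satisfy the properties of Lemma~\ref{lem:canonical-solution} (splitting/duplicating facilities and making $x$ distance-optimal) is polynomial. The oblivious clustering, bundling, matching and dependent-rounding phases of Algorithm~\ref{alg:main} are all polynomial by the original analysis of Charikar and Li~\cite{CharikarL12:k-median}. Altogether a single call runs in polynomial time.

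Finally, combining these two observations, the overall algorithm runs in polynomial time. Among all polynomially many guesses we output the cheapest solution found (evaluated under the ordered objective in polynomial time). Since one of the guesses corresponds to the correct thresholds of a fixed optimum solution, the expected cost of the output solution is at most the $38 \cdot \OPT$ bound guaranteed by Lemma~\ref{lem:general-reduction-to-one-rect}. There is no real obstacle in this proof; the only thing to double-check is the measurement of ``constant'' — namely that $R$ is taken to be independent of $n$ and $m$ so that $(nm)^R$ is polynomial — and that we indeed consider \emph{all} $(nm)^R$ threshold candidates, which covers the correct guess arising from an optimum solution.
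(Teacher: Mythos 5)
Your argument is correct and matches the paper's own proof: both observe that with $R=\Oh(1)$ the $\Oh((nm)^R)$ guesses from Lemma~\ref{lem:general-reduction-to-one-rect} become polynomial, each call to Algorithm~\ref{alg:main} runs in polynomial time, and the $38$-approximation guarantee carries over from that lemma (whose bound rests on Lemma~\ref{lem:oblivious-clustering}). The only added touch in your write-up, taking the cheapest solution over all guesses, is sound since $\expected{\min_g \cost(A_g)} \leq \expected{\cost(A_{g^*})}$ for the correct guess $g^*$.
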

\begin{proof}
 We have $\Oh(|\{ w_j: \: j \in \{1, 2, \dots, n\} \}|) = \Oh(1)$. 
 Therefore using Lemmas~\ref{lem:general-reduction-to-one-rect} and
 \ref{lem:oblivious-clustering} we get $38$-approximation solution in
 $n^{\Oh(1)} \cdot t(\calA) = n^{\Oh(1)} \cdot \poly(nm) = \poly(nm)$ time.
\end{proof}

 In Lemma~\ref{lem:gen-weights-to-log-eps-n-stairs} we show
 how to reduce the number of different weights to
 at most $\Oh(\log_{1+\epsilon}{n})$.
 Main idea of such reduction is partitioning an interval $[0,w_1]$ into buckets
 with geometrical step $1+\epsilon$. Solving such instance we lose
 factor $1+\epsilon$ on approximation because
 for $\alpha$-approximation solution $W_\alpha^*$
 for $\calI^*$, optimal solution $W_{\OPT}^*$ for $\calI^*$ and optimal solution $W_{\OPT}$ for $\calI$ we have
 \begin{align*}
  \cost_{\calI}(W_\alpha^*) &\leq (1+\epsilon) \cost_{\calI^*}(W_\alpha^*) \leq (1+\epsilon) \alpha \cdot \cost_{\calI^*}(W_{\OPT}^*)
 \leq (1+\epsilon) \alpha \cdot \cost_{\calI^*}(W_{\OPT}) \leq \\
 & \leq (1+\epsilon) \alpha \cdot \cost_{\calI}(W_{\OPT}). 
 \end{align*}

\begin{theorem}\label{thm:general-weights-const-apx}
 Let $\calA$ be an algorithm for \probonerectangleorderedkowa described
 in Section~\ref{sec:oblivious-clustering} (Algorithm~\ref{alg:main} with
 oblivious clustering, i.e., Algorithm~\ref{alg:oblivious-clustering}).
 Let $t(\calA)$ be running time of $\calA$.
 Then for any $\epsilon > 0$ there exists
 an $(38+\epsilon)$-approximation algorithm for \proborderedkmedian
 that runs in $(nm)^{\Oh(\log_{1+\epsilon}{n})} \cdot t(\calA)$ time.
\end{theorem}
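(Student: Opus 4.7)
The plan is to reduce the general case to the bounded-distinct-weights case of Lemma~\ref{lem:general-reduction-to-one-rect} via a standard weight bucketing argument, and then invoke that lemma. Set $\epsilon'=\epsilon/(38\cdot 2)$ so that eventual losses compound to at most $38+\epsilon$. First, I would show that weights of value below $\epsilon' w_1/n$ can be truncated to zero at the cost of an additive error of $\epsilon'\cdot\OPT$: such weights multiply at most $n$ sorted distances, each bounded by the largest distance in the optimum, which is itself $\leq \OPT/w_1$ (since the largest weight is $w_1$ and at least one distance in $\OPT$ is charged with that weight). Second, I would partition the remaining interval $[\epsilon' w_1/n, w_1]$ into geometric buckets with ratio $(1+\epsilon')$, yielding $\Oh(\log_{1+\epsilon'}(n/\epsilon'))=\Oh(\log_{1+\epsilon}n)$ buckets, and round every weight in a bucket \emph{up} to that bucket's upper endpoint. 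Call the resulting weight vector $w^*$ and the resulting instance $\calI^*$.

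Next, I would verify the two key properties of the rounded instance. Monotonicity: for every solution $S$, $\cost_{\calI}(S)\leq \cost_{\calI^*}(S)$ since each weight was rounded up (and some weights dropped below threshold contribute $0$ in $w^*$ but lose at most the additive error above). Approximate tightness: for every $S$, $\cost_{\calI^*}(S)\leq (1+\epsilon')\cost_{\calI}(S)+\epsilon'\cdot\OPT$, since non-truncated weights changed by a factor $\leq (1+\epsilon')$ and truncated weights contribute the already-bounded additive error. Crucially, $w^*$ has only $R=\Oh(\log_{1+\epsilon}n)$ distinct values and is non-increasing, so Lemma~\ref{lem:general-reduction-to-one-rect} applies to $\calI^*$.

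I would then run the algorithm of Lemma~\ref{lem:general-reduction-to-one-rect} on $\calI^*$, obtaining a (random) solution $A$ with $\expected{\cost_{\calI^*}(A)}\leq 38\cdot\OPT(\calI^*)$, using $\Oh((nm)^R)=(nm)^{\Oh(\log_{1+\epsilon}n)}$ calls to Algorithm~\ref{alg:main}, each running in time $t(\calA)$. Chaining the inequalities as in the displayed computation preceding Theorem~\ref{thm:general-weights-const-apx} gives
\begin{align*}
\expected{\cost_\calI(A)}
&\leq \expected{\cost_{\calI^*}(A)}
 \leq 38\cdot\cost_{\calI^*}(W_{\OPT})\\
&\leq 38\cdot\bigl((1+\epsilon')\cost_\calI(W_{\OPT})+\epsilon'\cdot\OPT\bigr)
 \leq (38+\epsilon)\cdot\OPT,
\end{align*}
where $W_{\OPT}$ is an optimal solution for $\calI$, and the last step uses the choice $\epsilon'=\epsilon/76$.

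The main obstacle is ensuring the bucketing indeed yields only $\Oh(\log_{1+\epsilon}n)$ distinct weights while simultaneously preserving the approximation guarantee up to a factor $1+\Oh(\epsilon)$; this requires the monotonicity of the rounded weight vector (to keep it a valid \proborderedkmedian instance) and a careful argument that weights rounded to $0$ cause only an additive $\epsilon\cdot\OPT$ loss using the trivial lower bound $\OPT\geq w_1\cdot T_1$ on the largest distance multiplied by $w_1$. Everything else is a routine substitution into Lemma~\ref{lem:general-reduction-to-one-rect}.
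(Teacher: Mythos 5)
Your plan is the same as the paper's: truncate tiny weights, bucket the rest geometrically into $\Oh(\log_{1+\epsilon}n)$ distinct levels, and invoke Lemma~\ref{lem:general-reduction-to-one-rect}. (The paper packages the bucketing as Lemma~\ref{lem:gen-weights-to-log-eps-n-stairs}, rounding weights \emph{down} via $(1+\epsilon)^{\lfloor\log_{1+\epsilon}w_j\rfloor}$; you round \emph{up}. Either direction can be made to work.) However, the two inequalities you state are mislabeled, and the displayed chain uses the false one.

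Concretely: because you round \emph{up}, the tightness bound is the easy one and needs no slack. Every nonzero $w^*_j$ satisfies $w_j\le w_j^*\le(1+\epsilon')w_j$, and every truncated coordinate has $w_j^*=0\le w_j$, so for \emph{any} $S$ we get $\cost_{\calI^*}(S)\le(1+\epsilon')\cost_{\calI}(S)$ \emph{without} the $+\epsilon'\OPT$ term. The slack actually belongs on the other side: the monotonicity $\cost_{\calI}(S)\le\cost_{\calI^*}(S)$ that you invoke as the first step of your chain is false in general, because the truncated coordinates strictly decrease $\cost_{\calI^*}(S)$ relative to $\cost_{\calI}(S)$. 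You cannot patch this with ``$c_1^\to(S)\le\OPT/w_1$'' either: in your chain $S=A$ is the (random) algorithm output, not the optimum, so that bound does not apply. The correct fix is the multiplicative one that works for arbitrary $S$: since $w_1 c_1^\to(S)\le\cost_{\calI}(S)$ always, the truncated part contributes at most $n\cdot(\epsilon' w_1/n)\cdot c_1^\to(S)\le\epsilon'\cost_{\calI}(S)$, yielding $\cost_{\calI}(S)\le\frac{1}{1-\epsilon'}\cost_{\calI^*}(S)$. Substituting this (with no additive term on the other inequality) gives
\[
\expected{\cost_{\calI}(A)}\le\frac{1}{1-\epsilon'}\expected{\cost_{\calI^*}(A)}\le\frac{38}{1-\epsilon'}\cost_{\calI^*}(W_{\OPT})\le\frac{38(1+\epsilon')}{1-\epsilon'}\cdot\OPT,
\]
and choosing $\epsilon'$ suitably gives $(38+\epsilon)$. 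The running time and distinct-weight count are as you state. So the strategy is right but the bookkeeping of which direction picks up the truncation error is reversed; as written, the first inequality in your chain does not hold.
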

\begin{proof}

 We change a vector of weights $w$ into $w^*$ using Lemma~\ref{lem:gen-weights-to-log-eps-n-stairs}
 and use Lemma~\ref{lem:general-reduction-to-one-rect}.

\end{proof}

\begin{corollary}
 There exists a randomized algorithm for \proborderedkmedian that gives
 $(38+\epsilon)$-approximation (in expectation) in
 quasi-polynomial time $n^{\Oh\left( 1/\epsilon \log{n} \right)} \cdot \poly(m)$
 for any $\epsilon > 0$.
\end{corollary}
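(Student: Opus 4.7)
The plan is to derive this corollary as a direct consequence of Theorem~\ref{thm:general-weights-const-apx}. That theorem already provides a randomized $(38+\epsilon)$-approximation in time $(nm)^{O(\log_{1+\epsilon} n)}\cdot t(\calA)$, where $\calA$ denotes Algorithm~\ref{alg:main} together with the oblivious clustering of Algorithm~\ref{alg:oblivious-clustering}. The only tasks left are therefore (i) to verify that $\calA$ itself runs in polynomial time in $n$ and $m$, and (ii) to simplify the $\log_{1+\epsilon}$ exponent into the cleaner form $\tfrac{1}{\epsilon}\log n$ stated in the corollary.

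For (i), I would argue that $t(\calA)=\poly(nm)$. The LP \lpc{$\cred$} has $\Theta(nm)$ variables and $\Theta(nm)$ constraints and is solvable in polynomial time by any standard LP algorithm. The subsequent oblivious clustering of Algorithm~\ref{alg:oblivious-clustering} performs at most $|\calC|$ iterations, each examining $O(n)$ distances to maintain $\cav(j)$ and to update $\calC''$; the bundles $\calU_j$ are read directly off the balls $\calB(j,R_j)$; the greedy closest-pair matching on $\calC'$ runs in polynomial time; and the dependent rounding procedure of Charikar and Li~\cite{CharikarL12:k-median} operates on a laminar family of total size $O(n+m)$ and is also polynomial-time implementable. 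Combining these gives $t(\calA)=\poly(nm)$.

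For (ii), I would use the elementary identity $\log_{1+\epsilon}(n)=\frac{\ln n}{\ln(1+\epsilon)}=O\!\left(\frac{\log n}{\epsilon}\right)$ valid for $\epsilon\in(0,1]$. Plugging this together with $t(\calA)=\poly(nm)$ into the running-time bound from Theorem~\ref{thm:general-weights-const-apx} yields total running time bounded by $n^{O((\log n)/\epsilon)}\cdot \poly(m)$, i.e., the quasi-polynomial bound stated in the corollary, while the approximation guarantee $(38+\epsilon)$ is inherited verbatim from the same theorem.

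I do not anticipate any real technical obstacle here: the corollary is a clean packaging of Theorem~\ref{thm:general-weights-const-apx} with the observation that the inner rounding procedure $\calA$ is polynomial-time. The only nontrivial step is the arithmetic manipulation of the exponent, which reduces to the standard estimate for $\log_{1+\epsilon}(n)$.
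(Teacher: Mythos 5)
Your proposal is correct and follows essentially the same route as the paper: the paper's proof likewise invokes Theorem~\ref{thm:general-weights-const-apx} (together with Lemma~\ref{lem:oblivious-clustering}), uses that the inner rounding procedure $\calA$ runs in $\poly(nm)$ time, and rewrites the exponent via $\log_{1+\epsilon} n = \Oh((1/\epsilon)\log n)$. Your explicit verification that $t(\calA)=\poly(nm)$ only spells out what the paper treats as immediate.
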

\begin{proof}
 Using Theorem~\ref{thm:general-weights-const-apx} and
 Lemma~\ref{lem:oblivious-clustering}
 we get an $38(1+\epsilon)$-approximation solution in
 $n^{\Oh(\log_{1+\epsilon}{n})} \cdot \poly(m) =
 n^{\Oh\left( 1/\epsilon \log{n} \right)} \cdot \poly(m)$ time.
\end{proof}

\begin{lemma}\label{lem:gen-weights-to-log-eps-n-stairs}
 Let $\calI = (\calF, \calC, c, k, w)$ be an instance of \proborderedkmedian and $\epsilon > 0$.
 There exists an instance $\calI^* = (\calF, \calC, c, k, w^*)$ of \proborderedkmedian
 such that for any solution $\calW \subseteq \calF, |\calW| = k$ we have
 \[  \cost_{\calI^*}(\calW) \leq \cost_{\calI}(\calW) \leq (1+\epsilon) \cdot \cost_{\calI^*}(\calW)\]
 and $w^*$ has at most $\Oh(\log_{1+\epsilon} n)$ different values, i.e.,
 $|\{a: \exists j \quad w_j^* = a \}| \in \Oh(\log_{1+\epsilon} n)$.
\end{lemma}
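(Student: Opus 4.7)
The plan is to discretize the weight vector by rounding each entry down to the nearest power of $(1+\epsilon)$, while zeroing out a ``tail'' of sufficiently small weights so that the bucketing range (and hence the number of distinct values) remains $\Oh(\log_{1+\epsilon} n)$. Formally, recall that $w_1=1$ is assumed in the paper. Set a cutoff threshold $\tau := \epsilon/(2n)$ and define
\begin{equation*}
  w_j^* := \begin{cases} (1+\epsilon')^{\lfloor \log_{1+\epsilon'}(w_j)\rfloor} & \text{if } w_j \geq \tau,\\ 0 & \text{otherwise,}\end{cases}
\end{equation*}
where $\epsilon'=\epsilon/2$ is a slightly slackened precision. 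Since the $w_j$ are non-increasing, so is $w^*$. The distinct non-zero values of $w^*$ are powers of $(1+\epsilon')$ lying in $[\tau,1]$, giving at most $\log_{1+\epsilon'}(1/\tau)+1 = \Oh(\log_{1+\epsilon} n)$ distinct values (absorbing the $\log(1/\epsilon)$ term into the $\Oh$ for $n$ large enough; smaller $n$ is trivial).

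Next I would verify the two inequalities. The lower bound $\cost_{\calI^*}(\calW)\leq \cost_{\calI}(\calW)$ is immediate since $w_j^*\leq w_j$ for every $j$ by construction. For the upper bound, split the sum at the cutoff:
\begin{equation*}
 \cost_{\calI}(\calW) - \cost_{\calI^*}(\calW) = \sum_{j:\, w_j\geq \tau}(w_j-w_j^*)\,c_j^\to(\calW) + \sum_{j:\, w_j<\tau} w_j\,c_j^\to(\calW).
\end{equation*}
For the first sum, $w_j\leq (1+\epsilon')w_j^*$ by the rounding rule, so this contribution is at most $\epsilon' \cost_{\calI^*}(\calW)$. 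For the second sum, using $w_j<\tau$, $c_j^\to(\calW)\leq c_1^\to(\calW)$, and there are at most $n$ such indices,
\begin{equation*}
  \sum_{j:\,w_j<\tau} w_j\,c_j^\to(\calW) \leq n\tau\,c_1^\to(\calW) = \tfrac{\epsilon}{2}\,c_1^\to(\calW) \leq \tfrac{\epsilon}{2}\,\cost_{\calI^*}(\calW),
\end{equation*}
where the last inequality uses that $w_1^*=w_1=1$ (so $\cost_{\calI^*}(\calW)\geq c_1^\to(\calW)$). Combining, $\cost_{\calI}(\calW)\leq (1+\epsilon'+\epsilon/2)\cost_{\calI^*}(\calW)=(1+\epsilon)\cost_{\calI^*}(\calW)$, as required.

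\textbf{Main obstacle.} The subtle point is the tail: naively bucketing all the way down to $w_n$ could produce arbitrarily many buckets if the smallest weight is much less than $1/n$. The fix is to zero out weights below $\tau=\Theta(\epsilon/n)$, paying for the loss by charging it to the leading term $w_1^* c_1^\to(\calW)$ of $\cost_{\calI^*}(\calW)$, which is preserved exactly because $w_1^*=w_1=1$. Calibrating $\tau$ and $\epsilon'$ so that both the multiplicative distortion from rounding and the additive loss from truncation add up to at most $\epsilon$ is the only delicate step; everything else is routine.
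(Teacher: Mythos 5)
Your proof is correct and takes essentially the same approach as the paper's: round each weight down to a power of a base $1+\Theta(\epsilon)$, zero out weights below $\Theta(\epsilon/n)$, and charge the truncated tail to the preserved leading term $w_1^*\, c_1^\to(\calW)$. You handle the constants slightly differently (splitting $\epsilon$ between rounding resolution and tail truncation, rather than keeping $w_1^*=w_1$ exact and absorbing both errors into a single $(1+\epsilon)$ factor as the paper does) and count the distinct values directly rather than via the paper's contradiction argument, but the underlying ideas match.
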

\begin{proof}
 We define $w^*$ by
\begin{equation}
 w_j^* = 
 \begin{cases}
 w_1 \quad &\text{for} \quad j = 1,\\
 (1+\epsilon)^{\left\lfloor \log_{1+\epsilon}{w_j} \right\rfloor} \quad &\text{for} \quad w_j > \frac{\epsilon w_1}{n} \;\text{and}\; j \neq 1 ,\\
 0 \quad &\text{for} \quad w_j \leq \frac{\epsilon w_1}{n}.\\
 \end{cases}
\end{equation}
First inequality follows directly from the definition of $w_j^*$. For the second inequality we have
\begin{align*}
 \cost_{\calI}(\calW) =
 w_1 \cdot c_1^\to(\calW) + &\sum_{\substack{j: \: w_j > \frac{\epsilon w_1}{n}\\j \neq 1}} w_j \cdot c_j^\to(\calW) + \sum_{\substack{j: \: w_j \leq \frac{\epsilon w_1}{n}\\j \neq 1}} w_j \cdot c_j^\to(\calW) \leq \\
 w_1^* \cdot c_1^\to(\calW) + &\sum_{\substack{j: \: w_j > \frac{\epsilon w_1}{n}\\j \neq 1}} (1+\epsilon) \cdot w_j^* \cdot c_j^\to(\calW) + \sum_{\substack{j: \: w_j \leq \frac{\epsilon w_1}{n}\\j \neq 1}} \frac{\epsilon w_1}{n} \cdot c_j^\to(\calW) \leq \\
 w_1^* \cdot c_1^\to(\calW) + &\sum_{\substack{j: \: w_j > \frac{\epsilon w_1}{n}\\j \neq 1}} (1+\epsilon) \cdot w_j^* \cdot c_j^\to(\calW) + \epsilon \cdot w_1^* \cdot c_1^\to(\calW) = \\
 (1+\epsilon) \cdot &\sum_{j: \: w_j > \frac{w_1}{n}} w_j^* \cdot c_j^\to(\calW) = (1+\epsilon) \cdot \cost_{\calI^*}(\calW).
\end{align*}
Let us assume that there is at least $2 \log_{1+\epsilon}{n} + 5$ different values $w_j^*$ and $n$ is large enough.
We know that the highest value of $w_j^*$ is equal to $w_1$.
It is possible that the lowest value of $w_j^*$ is equal to $0$.
By the induction we can show that the $p$-th highest value of
$\{ w_j^*: w_j^* > 0 \}$ for $p \in \{2, 3, \dots, \lceil2\log_{1+\epsilon}{n}\rceil+3\}$ is at most $\frac{w_1}{(1+\epsilon)^{p-2}}$.
Therefore 
\[ \min\{ w_j^*: w_j^* > 0 \} < \frac{w_1}{(1+\epsilon)^{\lceil2\log_{1+\epsilon}{n}\rceil+3-2}} \leq \frac{w_1}{(1+\epsilon) n^2} \leq \frac{w_1}{(1+\epsilon) n}.\]
So there exists $j$ such that $\frac{w_1}{n} \geq (1+\epsilon) w_j^* > 0$ and $w_j > \frac{\epsilon w_1}{n}$.
From the definition we have
$(1+\epsilon) w_j^* = (1+\epsilon)^{\left\lfloor \log_{1+\epsilon}{w_j} \right\rfloor + 1} > (1+\epsilon)^{ \log_{1+\epsilon}{w_j}} = w_j > \frac{\epsilon w_1}{n} > \frac{w_1}{n}$. Contradiction.
Therefore it is not possible that $w^*$ has at least $2 \log_{1+\epsilon}{n} + 5$ different values.
It means that $w^*$ has at most $\Oh(\log_{1+\epsilon}{n})$ different values.
 
\end{proof}

\section{Details for Polynomial-Time Algorithm from Section~\ref{sec:general-weights}}
We apply the clever bucketing approach of Aouad and Segev~\cite{AouadS17:logn-apx}, which allows guessing an
approximate reduced cost function in polynomial time.  Applying this
idea within our LP-based framework, however, renders the analysis more
intricate than the one with quasi-polynomial time.

\subsection{Distance Bucketing}
Let $\calW_\OPT$ be an optimal solution to a given \proborderedkmedian
instance. Let $c_{\max}:=c^\to_1(\calW_\OPT)$ be the maximum connection
cost in this solution. We assume that we know $c_{\max}$ as it is one
of $\Oh(mn)$ many possible distances in the input.  Fix an error
parameter $\epsilon>0$ and let $c_{\min}:=\epsilon\cdot
c_{\max}/n$. Roughly speaking, distances smaller than $c_{\min}$ can
have only negligible impact on any feasible solution as they may increase
its cost by a factor of at most $1+\epsilon$.

We now partition the distances of the vector $c^\to(\calW_\OPT)$ into
$S:=\lceil\log_{1+\epsilon}(n/\epsilon)\rceil=\Oh(\frac1\epsilon\log\frac{n}{\epsilon})$
many distance classes.  More precisely, for all $s=0,\dots,S-1$ introduce
the intervals
$D_s=(c_{\max}(1+\epsilon)^{-(s+1)},c_{\max}(1+\epsilon)^{-s}]$. Let
$D_S=[0,c_{\max}(1+\epsilon)^{-S}]\ni c_{\min}$.  For all
$s=0,\dots,S$ let $J_s=\{\,j\mid c^{\to}_j(\calW_\OPT)\in D_s\,\}$ and
let $C_s=\{\,c^{\to}_j(\calW_\OPT)\mid j\in J_S\,\}$. The classes
$C_0,\dots,C_S$ form a disjoint partition of $c^\to(\calW_\OPT)$ where
some of the classes may, however, be empty.  For technical reasons, we
assume that none of the intput distances $c_{ij},i\in\calF,j\in\calC$
coincides with a boundary of one of the intervals
$D_s$ for some $s=0,\dots,S$.  This can be achieved by slightly increasing all
boundaries of the intervals using the fact that the intervals are
left-open.
Additionally we define $J_{\geq s} = \bigcup_{r=s}^S J_r$.

\subsection{Guessing Average Weights}
For any non-empty class $C_s$ let
\begin{equation}\label{def:wavs}
 \wav^s:=\frac{1}{|C_s|}\Sum_{j\in J_s}w_j
\end{equation}
denote the \emph{average} weight applied to distances in this class.
If $C_s$ is empty then $\wav^s$ denotes the smallest weight $w_j$
applied to some distance $c^{\to}_j(\calW_\OPT)$ in a non-empty
class $C_l$ with $l<s$.  Such a class always exists as
$C_0\ni c_{\max}$ is non-empty.

As argued by Aouad and Segev~\cite{AouadS17:logn-apx}, it is possible
to guess the values of $\wav^s$ up to a factor of $1+\epsilon$ in
polynomial time $n^{\Oh(1/\epsilon\log{1/\epsilon})}$.  This is, because
we have $\wav^0\geq\wav^1\geq\dots\geq\wav^S$ and because it suffices
to guess those values as powers of $1+\epsilon$. More precisely, as a
result of this we assume that we are given values $\wg^0\geq\wg^1\geq\dots\geq\wg^S$ with
$\wav^s\leq \wg^s\leq (1+\epsilon)\wav^s$ for $i=0,\dots,S$.

\subsection{Reduced Cost Function and LP-Solving}
We are now ready to define our reduced cost function. 
For all values of $d\in [0,c_{\max}]$ let $w(d)$ be the weight $\wg^s$ such that
$d\in D_s$ for some $s\in\{0,\dots,S\}$.
For each $i\in\calF, j\in\calC$ such that
$c_{ij}\leq c_{\max}$ let $\cred_{ij}:= w(c_{ij})\cdot c_{ij}$.
Now solve the linear program \lpc{$\cred$} with additional constraints $x_{ij}=0$ for all
$i\in\calF,j\in\calC$ such that $c_{ij}>c_{\max}$.
In what follows let $(x,y)$ denote an optimal solution to this LP.
Now apply the rounding algorithm of Charikar and Li with oblivious clustering
(Algorithm~\ref{alg:main} with clustering as in
Algorithm~\ref{alg:oblivious-clustering}) to obtain an integral solution
$A \subseteq \calF, |A|=k$.

  Let $\OPT$ be the value (cost) of an optimum solution $W_\OPT$ for \proborderedkmedian and let
  $\OPTLP$ be the value of an optimum solution for $\lpc{\cred}$,
  let $A$  be the solution for \proborderedkmedian computed by our algorithm.
  We define distance class (interval) in which the distance $d$ falls by $D(d)$
  and $w_{n+1}=0$.

  Using Lemma~\ref{lem:oblivious-clustering} with $T_\ell = \max(D(c_\ell^\to(W_\OPT)))$
  for each $\ell=1,\dots,n$ we obtain
  \begin{equation}\label{ineq:costlA}
   \expected{\cost_{\ell}(A)} \leq c_1 \cdot \ell T_\ell + c_2 \cdot \sum_{j \in \calC} \cav^{T_\ell}(j)\,.
  \end{equation}
  We can partition the cost $\cost(A)$ of our algorithm as follows into rectangular pieces
  \begin{align}\label{ineq:costA-gen}
   &\expected{\cost(A)} = \expectedBigPar{\sum_{\ell=1}^{n} w_\ell \cdot c_\ell^\to(A)} = 
   \expectedBigPar{\sum_{\ell=1}^{n} \sum_{r=1}^{\ell} (w_\ell-w_{\ell+1}) \cdot c_r^\to(A)} = \nonumber\\
   &\expectedBigPar{\sum_{\ell=1}^{n} (w_\ell-w_{\ell+1}) \cdot \cost_\ell(A)} \stackrel{\eqref{ineq:costlA}}{\leq}
   19 \cdot \sum_{\ell=1}^{n} (w_\ell-w_{\ell+1}) \cdot \ell T_\ell +
    19 \cdot \sum_{\ell=1}^{n} \sum_{j \in \calC} (w_\ell-w_{\ell+1}) \cdot \cav^{T_\ell}(j)\,.
  \end{align}
  We would like to upper bound this in terms of $\OPT$.
  We know that the optimal solution pays at least cost $\inf(D_s)$ for each
  distance in distance bucket $C_s$ and thus
  \begin{align}
   \OPT \geq &\sum_{s=0}^{S} \left( \inf(D_s) \cdot \sum_{\ell \in J_s} w_\ell \right) =
   \sum_{s=0}^{S} \inf(D_s) \cdot \wav^s \cdot |C_s|. \label{ineq:opt-geq-rectangles-av}
  \end{align}
  \begin{lemma}\label{lem:ineq-wavs-geq-wls}
  \begin{equation}\label{ineq:wavs-geq-wlltl}
   \sum_{s=0}^{S} \inf(D_s) \cdot \wav^s \cdot |C_s| \geq
   \frac{1}{1+\epsilon} \sum_{\ell=1}^{n} (w_\ell-w_{\ell+1}) \cdot \ell \cdot T_\ell.
  \end{equation}
  \end{lemma}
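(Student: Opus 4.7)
My plan is to compare the two sides by reducing each to the scalar product $\sum_j w_j T_j$. Two ingredients drive the argument: the geometric ratio $\inf(D_s)/\sup(D_s)=1/(1+\epsilon)$ for $s<S$, and the fact that $T_\ell$ is non-increasing in $\ell$ (since $c^\to_\ell(W_\OPT)$ is, and hence so is the upper boundary of the class it falls into).

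First I would rewrite the LHS. By definition of $\wav^s$ we have $\wav^s\cdot|C_s|=\sum_{j\in J_s}w_j$, and for $s<S$ the geometric interval $D_s$ gives $\inf(D_s)=\sup(D_s)/(1+\epsilon)$; moreover $T_j=\sup(D_s)$ for every $j\in J_s$ by the definition of $T_\ell$. The class $s=S$ contributes nothing since $\inf(D_S)=0$. Collecting,
\[ \text{LHS}=\sum_{s=0}^{S-1}\inf(D_s)\sum_{j\in J_s}w_j=\frac{1}{1+\epsilon}\sum_{j\notin J_S}w_j T_j. \]
Next I would bound the RHS by a summation-by-parts argument. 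Since $T$ is non-increasing, $\ell T_\ell=\sum_{r=1}^\ell T_\ell\leq\sum_{r=1}^\ell T_r$; substituting, swapping the order of summation, and telescoping (with $w_{n+1}=0$) yields
\[ \sum_{\ell=1}^n(w_\ell-w_{\ell+1})\,\ell\,T_\ell\;\leq\;\sum_{r=1}^n T_r\sum_{\ell\geq r}(w_\ell-w_{\ell+1})\;=\;\sum_{r=1}^n w_r T_r. \]

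Finally I would match the two sides. The bound above exceeds $(1+\epsilon)\cdot\text{LHS}$ exactly by the tail $\sum_{r\in J_S}w_r T_r$. To control this residual I would exploit the choice $S=\lceil\log_{1+\epsilon}(n/\epsilon)\rceil$: every $T_r$ with $r\in J_S$ is at most $c_{\max}(1+\epsilon)^{-S}\leq\epsilon\, c_{\max}/n$, so the whole tail is at most $\epsilon\, c_{\max}=\epsilon\,w_1 T_1\leq\epsilon\sum_{j\notin J_S}w_j T_j=\epsilon(1+\epsilon)\,\text{LHS}$. Hence this residual is absorbed into the $(1+\epsilon)$ factor (after a mild rescaling of $\epsilon$ in the overall approximation analysis, which is harmless since the final guarantee is only $(38+\epsilon)$).

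\textbf{The main obstacle} will be this very tail handling: the bucket $D_S=[0,c_{\max}(1+\epsilon)^{-S}]$ has infimum $0$, so it contributes nothing to the LHS, yet its elements still appear on the RHS through the $T_\ell$'s. Resolving this is not accessible by a purely rearrangement-based calculation and requires invoking the specific choice of the grid parameter $S$ together with the floor $c_{\min}=\epsilon c_{\max}/n$ of the distance bucketing to argue that the tail weight times $T_\ell$ is negligible compared to $w_1 T_1\leq\text{LHS}\cdot(1+\epsilon)$.
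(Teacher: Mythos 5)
Your proof takes a genuinely different route from the paper's, and is arguably the more careful of the two. The paper first passes termwise from $T_\ell=\max(D(c_\ell^\to(W_\OPT)))$ to $\inf(D(c_\ell^\to(W_\OPT)))$ via the geometric bound $\frac{1}{1+\epsilon}\max(D_s)\leq\inf(D_s)$, and then performs an explicit re-indexing/telescoping of $\sum_\ell(w_\ell-w_{\ell+1})\,\ell\,\inf(D(c_\ell^\to))$ over the classes $J_s$ to show it equals $\sum_s\inf(D_s)\wav^s|C_s|$ plus a correction term, which a second telescoping shows to be nonpositive. You instead apply a single Abel summation to obtain $\sum_\ell(w_\ell-w_{\ell+1})\,\ell\,T_\ell\leq\sum_r w_r T_r$, identify the left-hand side of the lemma exactly with $\frac{1}{1+\epsilon}\sum_{j\notin J_S}w_j T_j$, and control the residual tail $\sum_{r\in J_S}w_r T_r$ quantitatively using the choice $S=\lceil\log_{1+\epsilon}(n/\epsilon)\rceil$. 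What the paper's approach buys is the exact constant $1+\epsilon$; what yours buys is transparency and an honest treatment of the last bucket.

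Two points to flag. First, your chain yields $\sum_s\inf(D_s)\wav^s|C_s|\geq\frac{1}{(1+\epsilon)^2}\sum_\ell(w_\ell-w_{\ell+1})\ell T_\ell$, i.e.\ a factor $(1+\epsilon)^2$ rather than the stated $1+\epsilon$; as you say, this is harmless for the final $(38+\epsilon)$-guarantee after rescaling $\epsilon$, but it does not reproduce the lemma verbatim. Second, and more substantively, the obstacle you single out is real: the paper's termwise step $\frac{1}{1+\epsilon}\max(D_s)\leq\inf(D_s)$ holds with equality for $s<S$ but fails for $s=S$, since $\inf(D_S)=0$ while $\max(D_S)>0$ and $(w_\ell-w_{\ell+1})\ell\geq 0$. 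The paper's written proof does not address indices $\ell\in J_S$ with $w_\ell>w_{\ell+1}$, whereas your bound $\sum_{r\in J_S}w_r T_r\leq n\cdot\max(D_S)\leq\epsilon c_{\max}\leq\epsilon(1+\epsilon)\sum_s\inf(D_s)\wav^s|C_s|$ is precisely what is needed to absorb that tail.
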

  \begin{proof}
  The right hand side is equal to
  \begin{align*}
   &\frac{1}{1+\epsilon} \sum_{\ell=1}^{n} (w_\ell-w_{\ell+1}) \cdot \ell \cdot \max(D(c_\ell^\to(W_\OPT))) \leq \\
   &\sum_{\ell=1}^{n} (w_\ell-w_{\ell+1}) \cdot \ell \cdot \inf(D(c_\ell^\to(W_\OPT))) =
   \sum_{s=0}^S \sum_{\ell \in J_s} (w_\ell-w_{\ell+1}) \cdot \ell \cdot \inf(D_s) = \\
   &\sum_{\substack{s=0\\J_s \neq \emptyset}}^S \inf(D_s)\left( w_{\min(J_s)} \cdot \min(J_s)
   + \hspace{-10pt}\sum_{\ell \in J_s \setminus \min(J_s)} \hspace{-10pt} w_\ell \cdot \ell
   - \hspace{-10pt} \sum_{\ell \in J_s \setminus \max(J_s)} \hspace{-10pt} w_{\ell+1} \cdot \ell
   - w_{\max(J_s)+1} \cdot \max(J_s) \right) = \\
   &\sum_{\substack{s=0\\J_s \neq \emptyset}}^S \inf(D_s)\left( w_{\min(J_s)} \cdot \min(J_s)
   + \hspace{-10pt} \sum_{\ell \in J_s \setminus \min(J_s)} \hspace{-10pt} w_\ell \cdot \ell
   - \hspace{-13pt} \sum_{\ell \in J_s \setminus \min(J_s)} \hspace{-15pt} w_{\ell} \cdot (\ell-1)
   - w_{\max(J_s)+1} \cdot \max(J_s) \right) = \\
   &\sum_{\substack{s=0\\J_s \neq \emptyset}}^S \inf(D_s)\left( w_{\min(J_s)} \cdot (\min(J_s)-1)
   + \sum_{\ell \in J_s} w_\ell
   - w_{\max(J_s)+1} \cdot \max(J_s) \right) = \\
   &\sum_{s=0}^S \inf(D_s) \cdot \wav^s \cdot |C_s| +
   \sum_{\substack{s=0\\J_s \neq \emptyset}}^S \inf(D_s) \left( w_{\min(J_s)} \cdot (\min(J_s)-1)
   - w_{\max(J_s)+1} \cdot \max(J_s) \right).
  \end{align*}
  Proof ends with showing that the second factor is non-positive:
  \begin{align}
   &\sum_{\substack{s=0\\J_s \neq \emptyset}}^S \inf(D_s) \left( w_{\min(J_s)} \cdot (\min(J_s)-1)
   - w_{\max(J_s)+1} \cdot \max(J_s) \right) = \nonumber\\
   &\sum_{\substack{s=1\\J_s \neq \emptyset}}^S \inf(D_s) \cdot w_{\min(J_s)} \cdot (\min(J_s)-1) -
   \sum_{\substack{s=0\\J_s \neq \emptyset}}^{S-1} \inf(D_s) \cdot w_{\max(J_s)+1} \cdot \max(J_s) = \nonumber\\
   &\sum_{\substack{s=1\\J_s \neq \emptyset}}^S \inf(D_s) \cdot w_{\min(J_s)} \cdot (\min(J_s)-1) -
   \sum_{\substack{s=1\\J_{s-1} \neq \emptyset}}^S \inf(D_{s-1}) \cdot w_{\max(J_{s-1})+1} \cdot \max(J_{s-1}) = \nonumber\\
   &\sum_{\substack{s=1\\J_s \neq \emptyset}}^S \inf(D_s) \cdot w_{\min(J_{\geq s})} \cdot (\min(J_{\geq s})-1) -
   \sum_{\substack{s=1\\J_{s-1} \neq \emptyset}}^S \inf(D_{s-1}) \cdot w_{\min(J_{\geq s})} \cdot (\min(J_{\geq s})-1) = \label{eq:sum-merging}\\
   &\sum_{\substack{s=1\\J_{s-1} \neq \emptyset\\J_s \neq \emptyset}}^S (\inf(D_s) - 
   \inf(D_{s-1})) \cdot w_{\min(J_{\geq s})} \cdot (\min(J_{\geq s})-1) + \nonumber\\
   &\sum_{
   \substack{0 \leq s_1 < s_2 \leq S\\
   s_1+1 < s_2\\
   J_{s_1}, J_{s_2} \neq \emptyset\\
   J_{s_3} = \emptyset \text{ for } s_1 < s_3 < s_2 }}\hspace{-13pt}
   \inf(D_{s_2}) \cdot w_{\min(J_{\geq s_2})} \cdot (\min(J_{\geq s_2})-1)-
   \inf(D_{s_1}) \cdot w_{\min(J_{\geq s_1+1})} \cdot (\min(J_{\geq s_1+1})-1) \leq \nonumber\\
   &\sum_{
   \substack{0 \leq s_1 < s_2 \leq S\\
   s_1+1 < s_2\\
   J_{s_1}, J_{s_2} \neq \emptyset\\
   J_{s_3} = \emptyset \text{ for } s_1 < s_3 < s_2 }}\hspace{-13pt}
   \inf(D_{s_2}) \cdot w_{\min(J_{\geq s_2})} \cdot (\min(J_{\geq s_2})-1)-
   \inf(D_{s_1}) \cdot w_{\min(J_{\geq s_2})} \cdot (\min(J_{\geq s_2})-1) = \nonumber\\
   &\sum_{
   \substack{0 \leq s_1 < s_2 \leq S\\
   s_1+1 < s_2\\
   J_{s_1}, J_{s_2} \neq \emptyset\\
   J_{s_3} = \emptyset \text{ for } s_1 < s_3 < s_2 }}\hspace{-13pt}
   (\inf(D_{s_2}) - \inf(D_{s_1}))
   \cdot w_{\min(J_{\geq s_2})} \cdot (\min(J_{\geq s_2})-1) \leq 0.\nonumber
  \end{align}
  The equality~\eqref{eq:sum-merging} is just a merge of two sums into two cases:
  when two consecutive class $C_{s-1}, C_s$ are non-empty or there are some positive number
  of empty classes between two non-empty classes $C_{s_1}, C_{s_2}$.
  \end{proof}
  \noindent
  For the second term from~\eqref{ineq:costA-gen} we have
  \begin{align}\label{eq:opt-ineq-fractions-lose-eps}
   &\sum_{\ell=1}^{n} \sum_{j \in \calC} (w_\ell-w_{\ell+1}) \cdot \cav^{T_\ell}(j) = \nonumber\\
   &\sum_{\ell=1}^{n} \sum_{j \in \calC} \sum_{i \in \calF} (w_\ell-w_{\ell+1}) \cdot x_{ij} \cdot c_{ij}^{T_\ell} = \nonumber\\
   &\sum_{j \in \calC} \sum_{i \in \calF} x_{ij} \cdot \sum_{\ell=1}^{n} (w_\ell-w_{\ell+1}) \cdot c_{ij}^{T_\ell} \stackrel{\eqref{def-cijT}}{=} \nonumber\\
   &\sum_{j \in \calC} \sum_{i \in \calF} x_{ij} \cdot \sum_{\substack{\ell = 1\\ \ell\colon c_{ij} > T_\ell}}^n (w_\ell-w_{\ell+1}) \cdot c_{ij} = \nonumber\\
   &\sum_{s=0}^S \sum_{\substack{j \in \calC, i \in \calF\\ c_{ij} \in C_s}} x_{ij} \cdot c_{ij} \cdot
    \sum_{\substack{\ell = 1\\\ell \in J_{\geq s+1}}}^n (w_\ell-w_{\ell+1}) = \nonumber\\
   &\sum_{s=0}^S \sum_{\substack{j \in \calC, i \in \calF\\ c_{ij} \in C_s}} x_{ij} \cdot c_{ij} \cdot
   w_{\min\{ J_{\geq s+1}\} } \leq \nonumber\\ 
   &\sum_{s=0}^S \sum_{\substack{j \in \calC, i \in \calF\\ c_{ij} \in C_s}} x_{ij} \cdot c_{ij} \cdot \wav^s \leq
   (1+\epsilon) \sum_{s=0}^S \sum_{\substack{j \in \calC, i \in \calF\\ c_{ij} \in C_s}} x_{ij} \cdot c_{ij} \cdot \wg^s = \nonumber\\
   &(1+\epsilon) \sum_{j \in \calC} \sum_{i \in \calF} x_{ij} \cdot \cred_{ij} \stackrel{\eqref{lp:objective}}{=} (1+\epsilon) \cdot \OPTLP.
  \end{align}
  This we can upper bound in terms of value of optimal solution $\OPT$.
  For that let us define the optimal solution $W_\OPT$
  as a feasible solution of $\lpc{\cred}$ and denote it as $(x^\OPT,y^\OPT)$.
  It means that $y_i^\OPT = 1 \iff i \in W_\OPT$ and $y_i^\OPT = 0 \iff i \notin W_\OPT$.
  \begin{align}\label{ineq:optlp-leq-opt}
   \OPTLP \leq &\sum_{j \in \calC} \sum_{i \in \calF} x_{ij}^\OPT \cred_{ij} =
   \sum_{j \in \calC} \sum_{i \in \calF} x_{ij}^\OPT c_{ij} w(c_{ij}) =
   \sum_{s=0}^S \sum_{\substack{j \in \calC, i \in \calF\\ c_{ij} \in C_s}} x_{ij}^\OPT c_{ij} \wg^s \leq \nonumber\\
   &\sum_{s=0}^S \left( \max(D_s) \cdot \wg^s \sum_{\substack{j \in \calC, i \in \calF\\ c_{ij} \in C_s}} x_{ij}^\OPT \right) = \nonumber\\
   &\sum_{s=0}^S \max(D_s) \cdot \wg^s \cdot |C_s| \leq
   (1+\epsilon)^2 \cdot \sum_{s=0}^S \inf(D_s) \cdot \wav^s \cdot |C_s| = \nonumber\\
   &(1+\epsilon)^2 \cdot \sum_{s=0}^S \sum_{\ell\in J_s} w_\ell \cdot \inf(D_s) \leq (1+\epsilon)^2 \cdot \OPT.
  \end{align}
   In the end we have
   \begin{equation}
    \cost(A) \stackrel{\eqref{ineq:costA-gen},\eqref{ineq:opt-geq-rectangles-av},\eqref{ineq:wavs-geq-wlltl},
    \eqref{eq:opt-ineq-fractions-lose-eps},\eqref{ineq:optlp-leq-opt}}{\leq}
    (1+\epsilon)^3\cdot 38 \cdot \OPT.
   \end{equation}

\end{document}